\newenvironment{keywords}{
       \list{}{\advance\topsep by0.35cm\relax\small
       \leftmargin=1cm
       \labelwidth=0.35cm
       \listparindent=0.35cm
       \itemindent\listparindent
       \rightmargin\leftmargin}\item[\hskip\labelsep
                                     \bfseries Keywords:]}
     {\endlist}
\newtheorem{Prop}{Proposition}
\newtheorem{Lem}{Lemma}
\newtheorem{Def}{Definition}
\newenvironment{Propprime}[1]
  {\renewcommand{\theProp}{\ref*{#1}$'$}%
   \addtocounter{Prop}{-1}%
   \begin{Prop}}
  {\end{Prop}}
\newenvironment{Defprime}[1]
  {\renewcommand{\theDef}{\ref*{#1}$'$}%
   \addtocounter{Def}{-1}%
   \begin{Def}}
  {\end{Def}}
\let\epsilon\varepsilon
\newcommand\abs[1]{\left|#1 \right|}
\newcommand\Brace[1]{\left\{\begin{aligned}#1\end{aligned}\right.}
\newcommand\kh[1]{\left(#1\right)}
\newcommand\fkh[1]{\left[#1\right]}
\newcommand\hkh[1]{\left\{#1\right\}}
\newcommand\given{\middle|}
\newcommand\eqn[1]{\begin{align}#1\end{align}}
\newcommand\eqns[1]{\begin{align*}#1\end{align*}}
\newcommand\indic[1]{\bm{1}\left(#1\right)}
\begin{document}
\setstretch{1.5}
\begin{titlepage}
\title{\textbf{Hail Mary Pass: Contests with Stochastic Progress}}
\author{Chang Liu\thanks{Department of Economics, Harvard University; \href{mailto:chang_liu@fas.harvard.edu}{\nolinkurl{chang_liu@fas.harvard.edu}}. I am deeply grateful to Shengwu Li, Tomasz Strzalecki, Eric Maskin and Benjamin Golub for their guidance and support throughout the project. I would also like to thank 
Shani Cohen,
Jeffrey Ely,
Yannai Gonczarowski,
Olivier Gossner,
Jack Hirsch,
Zo\"{e} Hitzig,
Johannes H{\"o}rner,
Yiping Hu,
Michihiro Kandori,
Annie Liang,
Ziqi Lu,
Benjamin Niswonger,
Matthew Rabin,
Phillip Strack,
Weijie Zhong,
and especially Alessandro Bonatti for valuable comments and discussion.}}
\date{\begin{tabular}{ rl } 
This version:&May 3, 2023
\end{tabular}}
\maketitle
\thispagestyle{empty}
\begin{abstract}
This paper studies the equilibrium behavior in contests with stochastic progress. Participants have access to a safe action that makes progress deterministically, but they can also take risky moves that stochastically influence their progress towards the goal and thus their relative position. In the unique \textit{well-behaved} Markov perfect equilibrium of this dynamic contest, the follower drops out if the leader establishes a substantial lead, but resorts to ``Hail Marys'' beforehand: no matter how low the return of the risky move is, the follower undertakes in an attempt to catch up. Moreover, if the risky move has a medium return (between high and low), the leader will also adopt it when the follower is close to dropping out \textendash{} an interesting preemptive motive. We also examine the impact of such equilibrium behavior on the optimal prize allocation.
\end{abstract}
\begin{keywords}
Dynamic contests, Markov perfect equilibrium, contest design
\end{keywords}
\end{titlepage}

\setstretch{1.5}

\newpage

\section{Introduction}
This paper studies contests with stochastic progress induced by risky moves.  Agents exert effort and compete to get ahead. They have the option to make progress deterministically, but they can also undertake risky moves that stochastically affect their relative position. The main theoretical question of this paper is: How do such risky moves affect equilibrium behavior? In particular, under what circumstances should agents take such moves? Will a risky move be adopted no matter how low its return is?

In the baseline model of this paper (Section \ref{sec:model}), two agents participate in a contest to compete for prize. Each agent observes their progress and decides \textit{whether and how} to exert effort. The agents are identical except for their initial position.  If an agent chooses not to exert effort, he irreversibly drops out of the contest. Otherwise, he incurs some flow cost, and selects either the \textit{safe action} that makes progress deterministically at a constant rate, or the \textit{risky move} that stochastically affects his progress. The current leader succeeds with some Poisson rate, at which point the contest ends. The contest is ``winner-takes-all'': all prize is awarded to the winner at the end of the contest.\footnote{In Section \ref{sec:design}, we extend the baseline model by allowing for the possibility that the follower may also have a positive hazard rate of success, and that the loser may also receive a positive prize.} The agents play according to a (pure strategy) \textit{Markov perfect equilibrium (MPE)}, where a Markov strategy  consists of a \textit{stopping rule} specifying \textit{whether} to exert effort, and an \textit{action rule} specifying \textit{how} to exert effort. 

Our main result is that, the follower drops out if the leader establishes a substantial lead, but resorts to ``Hail Marys'' beforehand: no matter how low the return of the risky move is, the follower undertakes it in an attempt to catch up. Moreover, if the risky move has a \textit{medium} return (between \textit{high} and \textit{low}), the leader will also adopt it when the follower is close to dropping out \textendash{} an interesting preemptive motive. 

Toward this result, we first demonstrate three types of MPEs of the dynamic contest and characterize the structure of the equilibrium strategies (Section \ref{sec:MPE}). In each type of MPE, the follower always takes the risky move irrespective of its mean return, but the leader's behavior is critically determined by the return of the risky move compared to the safe action. The analysis boils down to three cases, depending on whether the risky move has a \textit{low}, \textit{medium}, or \textit{high return}. If the risky move has a low return, then the leader will never find it optimal to take it, and the MPE features that each agent takes the risky move if and only if he is the follower (Proposition \ref{prop:L}). In the opposite case where the risky move has a high return, both agents optimally bear risk because it is a worthwhile investment (Proposition \ref{prop:H}). Surprisingly, if the risky move has a {medium return}, i.e., in between the previous two cases, the leader will also adopt it when seeking to force the follower out (Proposition \ref{prop:M}).  Furthermore, we demonstrate that these MPEs are unique within a reasonably large family of MPEs, which we refer to as \textit{well-behaved} MPEs (Propositions \ref{prop:Lprime}, \ref{prop:Hprime} and \ref{prop:Mprime}).

The natural next question is: How does such equilibrium behavior affect the optimal prize allocation? We investigate this question by analyzing an extended version of the baseline model that adapts all the derivations and  intuitions, and allowing the designer to allocate prizes subject to some fixed budget constraint (Secton \ref{sec:design}). We show that three reasonable objectives for the designer are equivalent in our contest model (Proposition \ref{prop:equivobj}). Moreover, the solution to the designer's problem indicates that ``winner-takes-all'' is optimal only when the designer's budget is small  (Proposition \ref{prop:optprize}).

\paragraph{Related Literature} This paper is related to the abundant literature on innovation or patent races pioneered by the work of \cite{Loury79} and \cite{DS80}. Within this literature, two main outcomes are emphasized: (i) \textit{$\epsilon$-preemption} \citep{FGST83,HV85a,HV85b,LM88}, where a slight advantage causes the opponent to drop out immediately, and (ii) \textit{increasing dominance} \citep{GS87,HV87,LM87}, where the follower tends to fall further behind and the leader builds up its advantage. Our model introduces risky moves and examines their impact on equilibrium behavior. The result shows that the follower chooses to stay in the contest when not far behind, hoping to rely on the ``Hail Mary pass'' to catch up. Moreover, the equilibrium degenerates to $\epsilon$-preemption if the variance of the risky move vanishes, demonstrating its essentiality.

Our model is closest to \cite{AB07}, which also studies dynamic competition where the choices of the agents affect the variance of some state variable. In \cite{AB07}, the players can influence the variance, but not the mean, and there is no endogenous exit from the game. In our model, both the mean and the variance may be affected by the agents' actions, and the agents are allowed to drop out at any time. Therefore, the two papers obtain different forms of MPEs. Especially, our model shows that the mean return of the risky move compared to the safe action is crucial in determining the agents' equilibrium behavior.

Technically, our model is a (one-dimensional) stochastic differential game, so it is naturally relevant to the theoretical research in this area  \citep{Girsanov61,Gusein-Zade69,Friedman72,Pliska73,Harris93}. In particular, equilibrium existence may become a tricky issue when the variance of the state variable may be affected by the actions of the players. Fortunately, we directly construct closed-form MPEs in our model, thereby circumventing the existence problem. However, a similar problem arises when we try to establish the uniqueness results, and for now, we have only shown that the constructed MPEs are unique within a reasonably large family of MPEs, which we refer to as \textit{well-behaved} MPEs.

This paper is also related to a recent wave of research on contest design that focuses on feedback in contests \citep{Yildirim05,Ederer10,HKL17,BEM19,BL20,EGKR21}.  Our model abstracts away from information disclosure entirely to focus on the impact of risky moves, but it would be interesting to combine both factors in future research.
\\ \\ 
\indent The rest of the paper is organized as follows. Section \ref{sec:model} lays out the baseline model. The first main part, Section \ref{sec:MPE}, shows that there are three types of MPEs of the dynamic contest and characterize the structure of the equilibrium strategies. The second main part, Section \ref{sec:design}, then analyzes the impact of such behavior on the optimal prize allocation.
 Section \ref{sec:concln} concludes.  Appendix \ref{app:proof} contains the proofs of all results in the main text. 

%\newpage

\section{Model} \label{sec:model}
The baseline model is a continuous-time contest with stochastic progress. Two agents (he)  $i \in\{1,2\}$ participate in a contest to compete for a prize of value $P>0$, and discount at a common rate $r$.

At each instant $t\in[0,\infty)$, each agent observes their \textit{progress} $\kh{k_1\kh{t},k_2\kh{t}}$ and decides \textit{whether and how} to exert effort. The agents are identical except for their initial position $k_i\kh{0}$. 
 If an agent chooses not to exert effort, he irreversibly drops out of the contest. Otherwise, if agent $i$ exerts effort, he incurs flow cost at rate $c$, and selects an action $a_i\kh{t}\in\hkh{0,1}$. The \textit{safe action} $a_i=0$ makes progress deterministically at a constant rate normalized to $1$. The \textit{risky move} $a_i=1$ stochastically affects agent $i$'s progress, with mean return $1+\pi$ and variance $\sigma^2$. If both agents take risks simultaneously, we assume the risks are independent. That is, agent $i$'s progress up to time $t$ evolves according to $$dk_i=(1+a_i\pi)\,dt+a_i\sigma\, dw_i,$$ where $w_1,w_2$ are independent Wiener processes.\footnote{If one wants progress to be nonnegative, it is equivalent to view $k_i=\log K_i$ as the logarithm of progress, and change the law of motion to $dK_i/K_i=\kh{1+a_i\kh{\pi+\frac{1}{2}\sigma^2}}\,dt+a_i\sigma\, dw_i$.} 

Assume that the current leader succeeds with Poisson rate $\lambda$, at which point the contest ends. That is, conditional on not having succeeded by time $t$, the leader's effort for an additional duration $dt$ yields success during the interval $\kh{t,t+dt}$ with probability $\lambda dt$. In other words, when both agents exert effort, agent $i$'s hazard rate of success is given by $h\kh{k_i,k_{j}}=\lambda\cdot\indic{k_i\ge k_{j}}$.\footnote{Later in Section \ref{sec:design}, we allow the follower to also have a positive hazard rate of success:  $h\kh{k_i,k_{j}}=\overline{\lambda}\cdot\indic{k_i\ge k_{j}}+\underline{\lambda}\cdot\indic{k_i<k_{j}}$ with $\overline{\lambda}>\underline{\lambda} \geq 0$.} 

In the baseline model, the contest is \textit{winner-takes-all}: all prize $P$ is awarded to the winner at the end of the contest.\footnote{Later in Section \ref{sec:design}, we allow the loser to also have a positive prize: the winner gets $\overline{P}$, the loser gets $\underline{P}$, with $\overline{P}\ge \underline{P}\ge 0$.} Thus, agent $i$'s ex post payoff  with realized contest duration $\tau$ is 
\eqn{e^{-r \tau} P \cdot \indic{k_i\kh{\tau}\ge k_{j}\kh{\tau}}-\int_{0}^{\tau} e^{-r t} c d t=e^{-r \tau} P \cdot \indic{k_i\kh{\tau}\ge k_{j}\kh{\tau}}-\frac{c}{r}\left(1-e^{-r \tau}\right).\label{eqn:payoff}}

We focus on (pure strategy) \textit{Markov perfect equilibria (MPE)}.  Following  \cite{MT01}, we define \textit{Markov strategies} as those that depend only on the \textit{payoff-relevant state}; that is, the strategies that are measurable with respect to the \textit{maximally coarse consistent partition} (the \textit{Markov partition}) of histories. Let $\Delta{k}\equiv k_i-k_{j}$ denote the difference in the agents' progress. Applying the \textit{affine invariance criterion}\footnote{It requires to verify that the continuation utility functions are appropriate affine transformations of one another, together with some regularity conditions of non-degeneracy (e.g., in any period, any player can choose different actions to ensure that the opponent's decision problem after that period is affected).} for the Markov partition \citep[Theorem 3.3]{MT01}, one can show that $\Delta{k}$ is indeed the payoff-relevant state of the contest game. Therefore, a Markov strategy  $\kh{\hat{k}_i,s_i}$ for agent $i$ depends only on $\Delta{k}$, and consists of a \textit{stopping rule} $\Delta{k}\le -\hat{k}_i$ specifying \textit{whether} to exert effort, and an \textit{action rule} $s_i\kh{\cdot}:\kh{-\hat{k}_i,\infty}\to \hkh{0,1}$ specifying \textit{how} to exert effort. An MPE $\kh{{k}^*,s^*}=\kh{{k}_i^*,s_i^*,{k}_j^*,s_j^*}$ is a subgame perfect equilibrium in which both players use Markov strategies.

In the next section, we will establish three types of MPEs of the dynamic contest in closed form, and 
show that the structure of the equilibrium strategies crucially depends on the return of the risky move compared to the safe action. Moreover, we will show their uniqueness within a reasonably large family of MPEs, which we will define as \emph{well-behaved} MPEs.

%\newpage

\section{Three Types of Equilibria}\label{sec:MPE}
In this section, we demonstrate three types of MPEs of the dynamic contest and characterize the structure of the equilibrium strategies. In each type of MPE, the follower drops out if the leader takes a significant lead, but engages in ``Hail Marys'' before then: no matter how low the return of the risky move is, the follower undertakes it to try to catch up. 

Moreover, the return of the risky move compared to the safe action is critical in determining the leader's behavior. The analysis boils down to three cases, depending on whether the risky move has a \textit{low}, \textit{medium}, or \textit{high return}. If the risky move has a low return, then the leader will never find it optimal to take it, and the MPE features that each agent takes the risky move if and only if he is the follower. In the opposite case where the risky move has a high return, both agents optimally bear risk as it is a worthwhile investment. Surprisingly, if the risky move has a {medium return}, i.e., in between the previous two cases, the leader will also adopt it when the follower is close to dropping out \textendash{} an interesting preemptive motive.

We denote by $\phi\equiv \lambda P/c$ the \textit{profitability} of the contest, which is the ratio between effective prize and flow cost.  $\phi$ is a crucial parameter that affects equilibrium behavior. Note that $\phi$ needs to be larger than 1 for there to be a meaningful equilibrium, because if $\lambda P<c$, even the leader cannot recover the cost and will drop out immediately. We are now ready to formally define the three cases, i.e., \textit{low/medium/high return}.

\begin{Def}[{Low/medium/high} return]\label{def:case}\
\begin{enumerate}
\item The risky move has a \emph{low return} if $$\pi\le 0.$$
\item The risky move has a \emph{medium return} if $$0<\frac{\pi}{\sigma}<\frac{\sqrt{r+\lambda}}{2}\cdot f\kh{\phi}.$$
\item The risky move has a \emph{high return} if $$\frac{\pi}{\sigma}\ge\frac{\sqrt{r+\lambda}}{2}\cdot f\kh{\phi}.$$	
\end{enumerate}
Here $f\kh{\cdot}$ is a strictly increasing function that rises from 0 to 1 as $\phi$ rises from 1 to infinity.\footnote{The closed-form expression of $f\kh{\cdot}$ is as follows:$$f\kh{\phi}=\frac{\sqrt{2\phi\kh{\sqrt{\phi^2+8}+\phi}-8}}{\sqrt{\phi^2+8}+\phi}.$$}
\end{Def}
In other words, low return means that the risky move has a lower mean return compared with the safe action; medium return means that the risky move has a strictly higher mean return, but the difference is below a certain upper bound; high return means that the difference is above that upper bound.

In the remainder of this section, we show that there exists a different form of MPE in each of the three cases (Propositions \ref{prop:L}, \ref{prop:H} and \ref{prop:M}). Furthermore, these MPEs are unique within a reasonably large family of MPEs, which we refer to as \textit{well-behaved} MPEs (Propositions \ref{prop:Lprime}, \ref{prop:Hprime} and \ref{prop:Mprime}).

\subsection{Equilibrium with Low Return: ``Hail Mary Pass''}
We begin our analysis with the low return case, where the risky move has a lower mean return compared with the safe action ($\pi\le 0$). In the candidate MPE, the follower drops out if the leader takes significant lead, but engages in ``Hail Marys'' before then. That is, no matter how bad the risky move is, the follower undertakes it as an attempt to catch up. In contrast, the leader always selects the safe action, but is prepared to switch to the risky move once caught up by the follower.

Proposition \ref{prop:L} formally states this MPE.
\begin{Prop}[Low return case]\label{prop:L}
If the risky move has a low return, then there exists a unique $k^*>0$ such that the following strategy profile constitutes an MPE:
$$\text{Drop out iff }\Delta{k}\le -k^*, \text{ and otherwise } s_i^*\kh{\Delta{k}}=1\text{ iff }\Delta{k}<0.$$	
\end{Prop}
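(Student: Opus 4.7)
The plan is to construct the value function $V:[-k^*,\infty)\to\mathbb{R}$ associated with the candidate profile in closed form, and then verify the MPE by checking that no agent has a profitable one-shot deviation.

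I would first translate the candidate strategies into a system of linear second-order ODEs on $\Delta k$. On the follower region $(-k^*,0)$, the agent plays risky while the opponent plays safe, so $d\Delta k$ has drift $\pi$ and variance $\sigma^2$, and only the opponent has hazard $\lambda$; on the leader region $(0,k^*)$, the agent plays safe while the opponent plays risky, giving drift $-\pi$, variance $\sigma^2$, and own hazard $\lambda$ of winning $P$. This yields
\begin{align*}
(r+\lambda)V &= -c + \pi V' + \tfrac12\sigma^2 V'' \quad\text{on } (-k^*,0), \\
(r+\lambda)V &= \lambda P - c - \pi V' + \tfrac12\sigma^2 V'' \quad\text{on } (0,k^*),
\end{align*}
together with $V\equiv V_a := (\lambda P - c)/(r+\lambda)$ on $[k^*,\infty)$ after the opponent drops out. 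Each ODE contributes two exponential homogeneous modes and a constant particular solution, giving four coefficients; with $k^*$ there are five unknowns. I would pin them down by five conditions: value matching $V(-k^*)=0$ and smooth pasting $V'(-k^*)=0$ at the follower's endogenous exit; $C^1$ pasting $V(0-)=V(0+)$, $V'(0-)=V'(0+)$ at the role transition; and value continuity $V(k^*)=V_a$ at the opponent's exit (which is exogenous from the leader's viewpoint, so smooth pasting is not imposed there). I would then reduce the system to a single equation in $k^*$ and show existence and uniqueness of a positive root by monotonicity of the defining map together with an intermediate-value argument on its limits as $k^*\to 0^+$ and $k^*\to\infty$.

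Finally, I would verify the equilibrium by three variational inequalities on $V$. The follower's preference for risky over safe on $(-k^*,0)$ reduces, via substitution into the HJB, to $(r+\lambda)V+c\ge 0$, implied by $V\ge 0$. The leader's preference for safe over risky on $(0,k^*)$ reduces to $-\pi V'-\tfrac12\sigma^2 V''\ge 0$, which, since $-\pi\ge 0$, holds whenever $V_L$ is nondecreasing and concave there. Optimality of drop-out amounts to $V\ge 0$ on $[-k^*,\infty)$ with equality only at $-k^*$. The main obstacle is establishing the shape of $V_L$ needed in the leader's inequality. I would prove $V_L\le V_a$ on $[0,k^*]$ by a comparison argument (being alone strictly dominates facing a live opponent), then rearrange the leader's HJB as $\tfrac12\sigma^2 V_L'' = (r+\lambda)(V_L - V_a) + \pi V_L'$: because $V_L\le V_a$ and $\pi\le 0$ while $V_L'\ge 0$ (the latter obtained by differentiating the HJB and applying a maximum-principle argument), both terms on the right are nonpositive, so $V_L''\le 0$, delivering concavity. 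The companion inequality $V_F\ge 0$ follows by solving the $2\times 2$ boundary system at $-k^*$, which yields positive coefficients for both exponentials, so that $V_F + c/(r+\lambda)$ is a sum of two positive exponentials, hence strictly above $c/(r+\lambda)$ away from $-k^*$.
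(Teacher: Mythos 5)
Your construction is the same as the paper's: write the HJB for the candidate profile, reduce to piecewise second-order linear ODEs with exponential modes $\xi_\pm$ from the characteristic equation, close the system with value matching and smooth pasting at $-k^*$, $C^1$ pasting at $0$, and value continuity $V(k^*)=(\lambda P-c)/(r+\lambda)$, then pin down a unique $k^*>0$ from the resulting scalar equation. Where you diverge is the verification step. The paper checks the switching function $\pi V'+\tfrac12\sigma^2 V''$ only at the endpoints ($\Delta k=k^*$ reduces to $\pi\le 0$; $\Delta k=-k^*$ reduces to $c\ge 0$) and propagates across each region via the monotonicity of that function (the ``second-order condition'' $\pi V''+\tfrac12\sigma^2 V'''\ge 0$). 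You instead substitute the equilibrium ODE back into the variational inequality: on the follower region this turns the condition into $(r+\lambda)V+c\ge 0$, which your positive-coefficient computation at $-k^*$ delivers at every point at once (arguably cleaner than the paper's endpoint-plus-monotonicity route); on the leader region it turns into $(r+\lambda)(V-V_a)+2\pi V'\le 0$, which you obtain from $V\le V_a$ and $V'\ge 0$. Those two shape claims are the load-bearing sketched steps, but both are sound and non-circular in the order you state them: $V\le V_a$ follows from a pathwise comparison (the success time is exponential with rate $\lambda$ regardless of play, so any continuation value is bounded by the value of winning for sure), this gives $V'(k^{*-})\ge 0$, the follower-side convexity gives $V'(0^+)>0$, and the maximum principle applied to the differentiated ODE then yields $V'\ge 0$ on the leader region. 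This is essentially the mechanism behind the paper's Lemma 2 (concavity of the leader's value), so the two verifications are ultimately equivalent; yours makes the economic logic more transparent, while the paper's is more mechanical given the explicit exponential solution.
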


\begin{proof}
All proofs of the results in the main text are in Appendix \ref{app:proof}.
\end{proof}

The proof Proposition \ref{prop:L} takes two steps. The first step is to convert the (Hamilton-Jacobi-)Bellman equations into second-order ordinary differential equations using It\^{o}'s lemma. The second step imposes the boundary conditions of value matching and smooth pasting, and verifies that the agents have no profitable deviations. 

Figure \ref{fig:L} visualizes this MPE from the perspective of agent $i$. There are four regions: In the leftmost region, agent $i$ drops out because he falls too far behind; conversely, the opponent agent $j$ drops out in the rightmost region. In the middle regions, blue indicates that agent $i$ chooses to take risk, while red indicates that the opponent chooses to take risk. As is shown in Figure \ref{fig:L}, agent $i$ takes the risky move if and only if he falls behind, which mirrors the ``Hail Mary pass'' situation.

\begin{figure}[!htbp]\centering
\includegraphics[width=\textwidth]{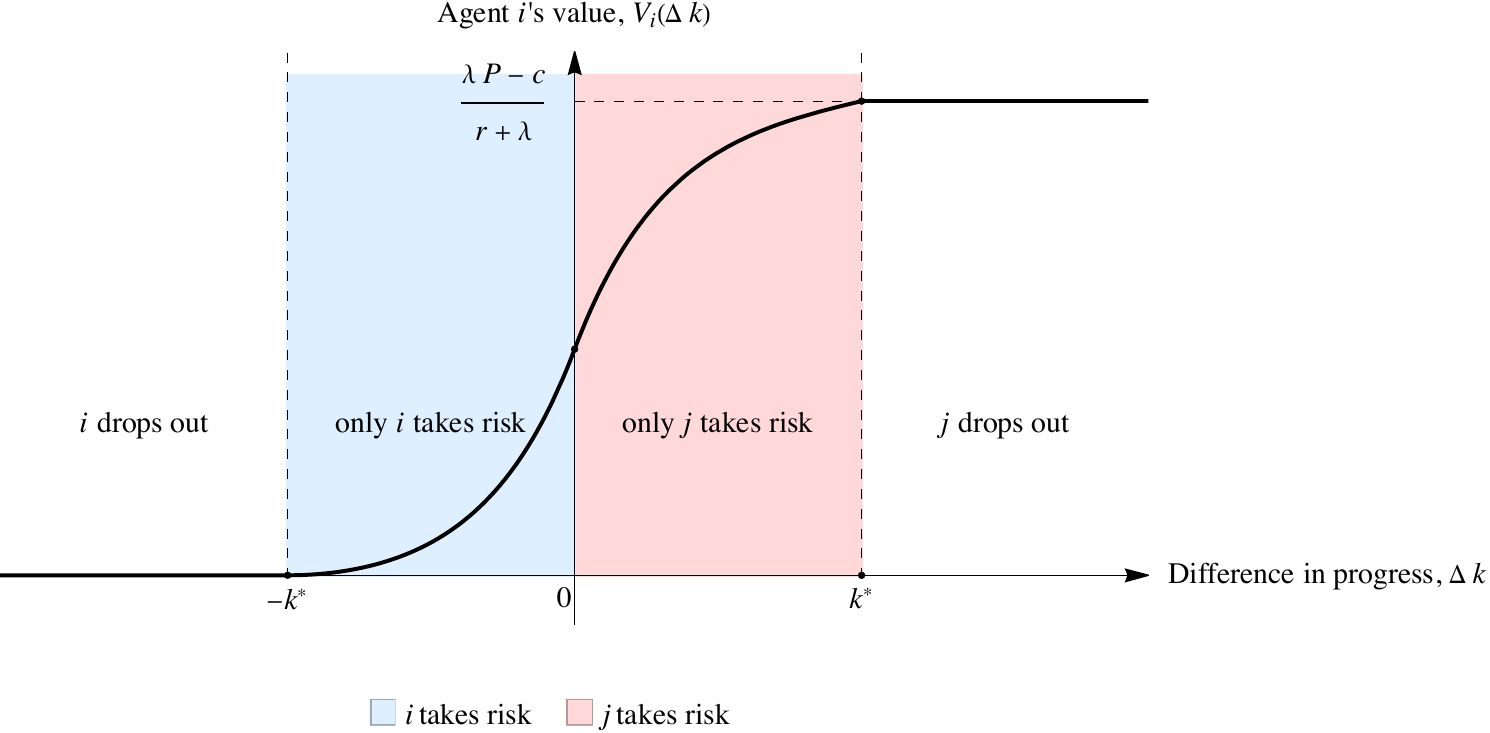}
\caption{Equilibrium with low return: ``Hail Mary pass''}\label{fig:L}
\end{figure}

Note that the MPE in Proposition \ref{prop:L} is completely specified by the stopping boundary $k^*$, which enables the comparative static analysis with respect to the modeling parameters. We summarize the results as follows:

\begin{enumerate}
\item The continuation region becomes larger as the contest becomes more profitable.

Formally, the stopping boundary $k^*$ is strictly increasing in the ratio $P/c$. Moreover, in the limit case, $k^*$ tends to infinity as either the prize of the contest $P$ tends to infinity, or the flow cost $c$ vanishes.

\item The continuation region becomes smaller as the risky move becomes less desirable.

The stopping boundary $k^*$ is strictly increasing in the ratio $\pi/\sigma$ (which by assumption takes a negative value in the low return case). Moreover, in the limit case, $k^*$ tends to zero as either $\pi$ tends to negative infinity, or $\sigma$ vanishes. The only reason the follower chooses to stay is hoping to rely on the ``Hail Mary pass'' to catch up, and the MPE degenerates to \textit{$\epsilon$-preemption}\footnote{ $\epsilon$-preemption is the equilibrium outcome in many standard models of innovation or patent races \citep{FGST83,HV85a,HV85b,LM88}, where a slight advantage causes the opponent to dropout immediately.} if that move gets effectively removed.

\item The result is ambiguous if success occurs at a higher rate.

The stopping boundary $k^*$ changes non-monotonically in the hazard rate of success $\lambda$, because two competing forces coexist. On the one hand, the contest becomes more profitable ($\phi$ increases) as $\lambda$ increases, pushing $k^*$ up. On the other hand, the follower becomes less likely to catch up in time, pushing  $k^*$ down. The first force dominates when $\lambda$ is small, and the second force dominates when $\lambda$ is large,  demonstrating a single-peaked relationship overall.
\end{enumerate}

\subsection{Equilibrium with High Return}
The opposite case is easy to characterize, but perhaps not as interesting. When the risky move has a high return, both agents optimally bear the risk, because it is a worthwhile investment. The formal statement is Proposition \ref{prop:H}.

\begin{Prop}[High return case]\label{prop:H}
If the risky move has a high return, then there exists a unique $k^*>0$ such that the following strategy profile constitutes an MPE:
$$\text{Drop out iff }\Delta{k}\le -k^*, \text{ and otherwise } s_i^*\kh{\Delta{k}}=1.$$	
\end{Prop}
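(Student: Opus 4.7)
The plan is to construct the value function $V$ of a representative agent under the conjectured symmetric strategy, pin down the stopping threshold $k^{*}$ from boundary conditions, and then verify that no unilateral deviation is profitable; the last step is where the high-return hypothesis does the real work.

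First, I would apply It\^{o}'s lemma to the HJB. When both agents play the risky move, $\Delta k=k_i-k_j$ has zero drift and variance $2\sigma^{2}$. Writing $V_L$ for agent $i$'s value on the leader region $(0,k^{*})$ (success hazard $\lambda$) and $V_F$ on the follower region $(-k^{*},0)$ (killed by the opponent at hazard $\lambda$), the HJB reduces to the pair of linear second-order ODEs
\[
\sigma^{2}V_L''=(r+\lambda)V_L+c-\lambda P, \qquad \sigma^{2}V_F''=(r+\lambda)V_F+c,
\]
whose general solutions are affine combinations of $e^{\pm\alpha\Delta k}$ with $\alpha=\sqrt{r+\lambda}/\sigma$.

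Second, I would impose five boundary conditions: value matching and smooth pasting at $\Delta k=-k^{*}$ (agent $i$'s own free boundary); value matching $V_L(k^{*})=(\lambda P-c)/(r+\lambda)$ at the opponent's exit, where agent $i$ is left alone; and $C^{1}$-continuity of $V_L$ with $V_F$ at $\Delta k=0$. These pin down the four linear coefficients together with the free boundary $k^{*}$. Routine algebra collapses the defining equation for $k^{*}$ to $\phi=\cosh(2\alpha k^{*})/\cosh(\alpha k^{*})$, whose right-hand side strictly increases from $1$ at $k^{*}=0$ to $\infty$, giving a unique positive root whenever $\phi>1$. A further manipulation yields the clean identity $\tanh(\alpha k^{*})=f(\phi)$, which ties the constructed threshold back to the function in Definition \ref{def:case}.

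Third, I would verify no profitable deviation. Dropping out inside the continuation region is ruled out by $V\geq 0$ (from $V_F$ being convex with $V_F(-k^{*})=V_F'(-k^{*})=0$), and continuing past the stopping boundary follows from smooth pasting together with the standard verification argument for optimal stopping. The substantive check is switching from risky to safe: deviating subtracts $\pi$ from the drift and halves the variance of $\Delta k$, so the one-shot deviation (supersolution) inequality reduces, on both $(0,k^{*})$ and $(-k^{*},0)$, to
\[
\tfrac{\sigma^{2}}{2}V''(\Delta k)+\pi\,V'(\Delta k)\geq 0.
\]
On the follower side this is immediate since $V_F$ is increasing and convex while $\pi\geq 0$. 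On the leader side $V_L$ is increasing but concave, and plugging in its closed form reduces the inequality to $\pi\geq \tfrac{\sigma\sqrt{r+\lambda}}{2}\tanh(\alpha(k^{*}-\Delta k))$, whose right-hand side is maximized at $\Delta k=0$; combined with $\tanh(\alpha k^{*})=f(\phi)$, this tightest constraint becomes exactly the high-return hypothesis $\pi/\sigma\geq \tfrac{1}{2}\sqrt{r+\lambda}\,f(\phi)$. The main obstacle is precisely this last reduction---correctly identifying the binding point of the leader's no-deviation inequality at $\Delta k=0$ and recognizing that the resulting bound $\tanh(\alpha k^{*})$ coincides with $f(\phi)$ via the closed-form characterization $\phi\cosh(x)=\cosh(2x)$.
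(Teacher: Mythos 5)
Your proposal is correct and follows essentially the same route as the paper: convert the HJB to the pair of constant-coefficient ODEs with $\eta=\sqrt{r+\lambda}/\sigma$, pin down $k^*$ via value matching and smooth pasting (your $\phi=\cosh(2\eta k^*)/\cosh(\eta k^*)$ is exactly the paper's equation \eqref{eqn:FOC2}), and verify no deviation through the sign of $\pi V_i'+\tfrac{1}{2}\sigma^2 V_i''$, with the binding check at $\Delta k=0+$ yielding precisely the high-return threshold $\pi/\sigma\ge\tfrac{\sqrt{r+\lambda}}{2}f(\phi)$. Your $\tanh(\eta k^*)=f(\phi)$ identity and the monotonicity argument locating the binding point are a slightly more explicit rendering of the paper's endpoint checks plus second-order condition, but the substance is identical.
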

Figure \ref{fig:H} visualizes this MPE from the perspective of agent $i$. Consistent with Figure \ref{fig:L}, blue indicates that agent $i$ chooses to take risk, while red indicates that the opponent chooses to take risk. Both agents choose to take risk in this MPE, and therefore  the entire middle region in Figure \ref{fig:H} is purple. 
\begin{figure}[!htbp]\centering
\includegraphics[width=\textwidth]{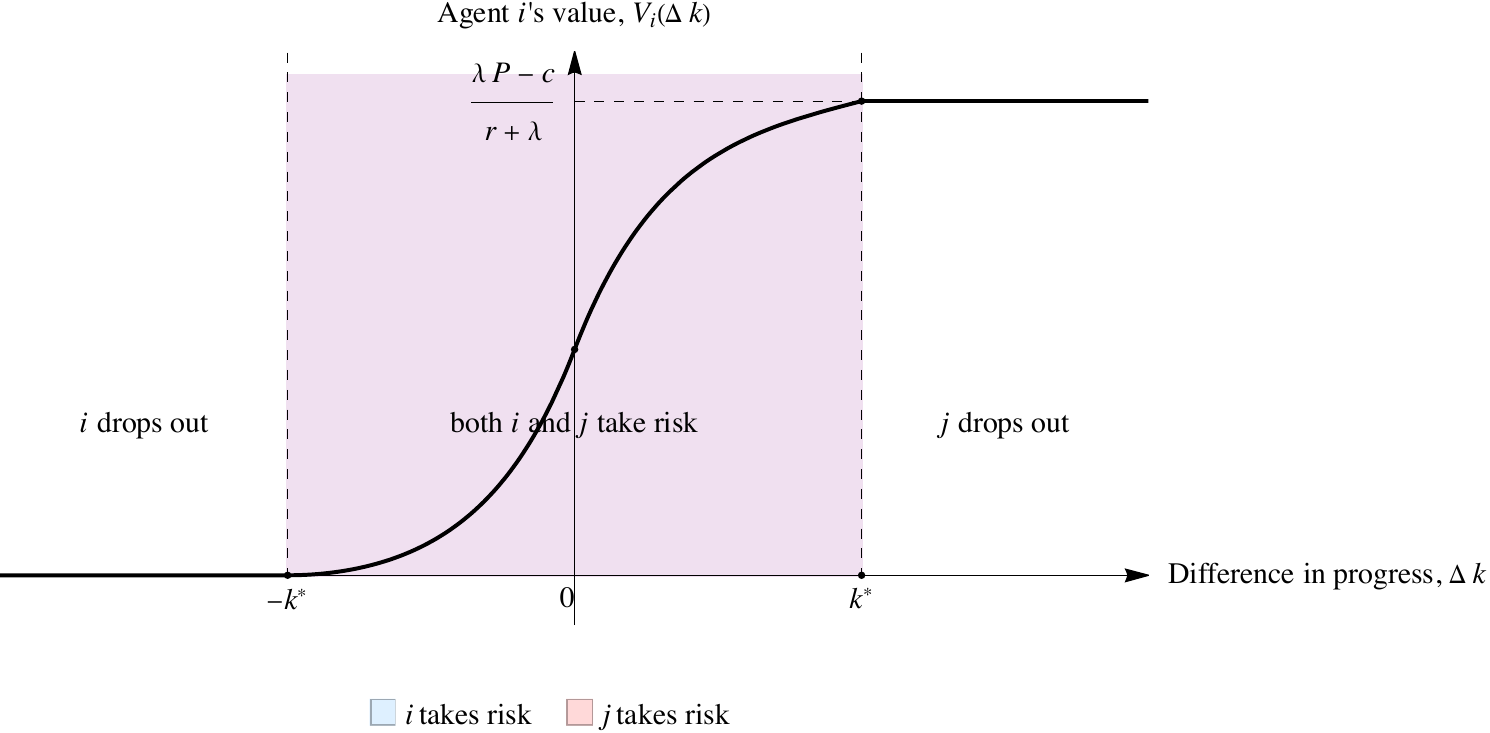}
\caption{Equilibrium with high return}\label{fig:H}
\end{figure}

\subsection{Equilibrium with Medium Return: ``Hail Mary Pass'' $+$ Preemption}
Finally we analyze the remaining medium return case, where the risky move has a higher mean return than the safe action, yet not high enough to be a worthwhile investment in absence of the opponent. 

In the candidate MPE, not only does the follower takes the risky move to try to catch up,  the leader will also adopt it when follower is close to dropping out. The formal statement is Proposition  \ref{prop:M}.

\begin{Prop}[Medium return case]\label{prop:M}
If the risky move has a medium return, then there exists a unique pair of $k^*>k^{**}>0$ such that the following strategy profile constitutes an MPE:
$$\text{Drop out iff }\Delta{k}\le -k^*, \text{ and otherwise } s_i^*\kh{\Delta{k}}=1\text{ iff }(\Delta{k}<0\text{ or }\Delta{k}\ge k^{**}).$$	
\end{Prop}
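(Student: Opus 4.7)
The plan is to mirror the two-step recipe used in the proofs of Propositions~\ref{prop:L} and \ref{prop:H}: first apply It\^{o}'s lemma to convert each agent's Bellman equation into a piecewise second-order linear ODE on $\Delta k$, then close the boundary value problem with appropriate matching/pasting/indifference conditions, and finally verify that the candidate strategy is a mutual best response in Markov strategies.

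The new feature of the medium-return case is that the candidate thresholds $-k^{*},-k^{**},0,k^{**},k^{*}$ partition the continuation region into four subintervals, indexed by which agents take the risky move: (a)~$(-k^{*},-k^{**})$, both take risky (follower $i$ and preempting leader $j$); (b)~$(-k^{**},0)$, only follower $i$ takes risky; (c)~$(0,k^{**})$, only follower $j$ takes risky; and (d)~$(k^{**},k^{*})$, both take risky (follower $j$ and preempting leader $i$). On each subinterval the drift and variance of $\Delta k$ are constants, so the value function $V$ satisfies a linear constant-coefficient second-order ODE whose general solution is an exponential-plus-constant, giving eight integration constants in total; on $[k^{*},\infty)$ it equals the alone-value $V_{5}=(\lambda P - c)/(r+\lambda)$.

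I would then pin down the 8 integration constants together with the two thresholds using 10 equations: value matching and smooth pasting at $i$'s stopping boundary $-k^{*}$ (2); value matching and smooth pasting of $V$ at each of the three interior joins $-k^{**},0,k^{**}$ (6); value matching at $k^{*}$ where $V$ joins $V_{5}$ (1); and the leader's indifference condition $\pi V'(k^{**}) + \tfrac{1}{2}\sigma^{2} V''(k^{**}) = 0$ that fixes the location of his switch between safe and risky (1). By the symmetry of the two-agent problem, the opponent's indifference at $-k^{**}$ collapses to the same equation, so no further condition is needed there. The next step is to show that this system admits a unique solution with $k^{*} > k^{**} > 0$ precisely under the medium-return bound $0 < \pi/\sigma < \tfrac{\sqrt{r+\lambda}}{2} f(\phi)$: intuitively, the upper bound prevents $k^{**}$ from collapsing to $0$ (in which case the leader would take risky on all of $(0,k^{*})$, i.e., the high-return MPE of Proposition~\ref{prop:H}), while the lower bound $\pi>0$ prevents $k^{**}$ from escaping to $k^{*}$ (in which case the leader would never take risky, i.e., the low-return MPE of Proposition~\ref{prop:L}).

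The main obstacle is the global optimality check. It is not enough that the indifference holds at $k^{**}$; one must show that $\pi V'(x) + \tfrac{1}{2}\sigma^{2} V''(x)$ has the correct sign on every subinterval: nonnegative on the follower's range $(-k^{*},0)$ and on the leader's preemption range $(k^{**},k^{*})$, and nonpositive on $(0,k^{**})$ where the leader strictly prefers safe. Establishing these global sign properties requires exploiting the monotone structure of the exponential general solutions together with the magnitudes forced by the boundary conditions; the computation is qualitatively the same as in the other two cases, but bookkeeping across four pieces rather than two makes it the most delicate step. As a final sanity check one verifies $V \ge 0$ throughout $(-k^{*},k^{*})$ so that no agent gains by dropping out strictly before the stopping boundary.
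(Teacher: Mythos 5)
Your proposal matches the paper's proof essentially step for step: the same four-region decomposition of the continuation interval, the same piecewise constant-coefficient ODEs closed by value matching, smooth pasting at $-k^{*}$, and the leader's indifference condition $\pi V'(k^{**})+\tfrac{1}{2}\sigma^{2}V''(k^{**})=0$ (which the paper solves in closed form for $k^{**}$ as a function of $k^{*}$), followed by the same global sign verification of $\pi V'+\tfrac{1}{2}\sigma^{2}V''$ on each subinterval via the second-order condition $\pi V''+\tfrac{1}{2}\sigma^{2}V'''\ge 0$. No substantive differences to flag.
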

The MPE strategies outlined in Proposition \ref{prop:M} are combinations of the low and high return cases, and converge appropriately in the limit.\footnote{Formally, $k^{*}-k^{* *} \to  0$ as $\frac{\pi}{\sigma} \downarrow 0$, and the MPE converges to the case of low return. Moreover, $k^{**}\to 0$ as $\frac{\pi}{\sigma} \uparrow \frac{\sqrt{r+\lambda}}{2} \cdot f(\phi)$, and the MPE  converges to the case of high return.}  The analysis reveals two different motivations for taking the risky move: one is the previous \emph{Hail Mary motive},  which the follower use to try to catch up; the other is a new 
\emph{preemptive motive}, where the leader seeks to drive the follower out. The preemptive motive is strongest at the follower's stopping boundary, because the temptation to win the contest for sure is huge for the leader, and in addition the large lead makes it very unlikely that the follower will catch up even if the risky move does not work well.

Figure \ref{fig:M} visualizes this MPE from the perspective of agent $i$. Consistent with Figures  \ref{fig:L} and \ref{fig:H}, blue indicates that agent $i$ chooses to take risk, while red indicates that the opponent chooses to take risk. The purple regions indicate that both agents are taking risks simultaneously, but for different reasons.

\begin{figure}[!htbp]\centering
\includegraphics[width=\textwidth]{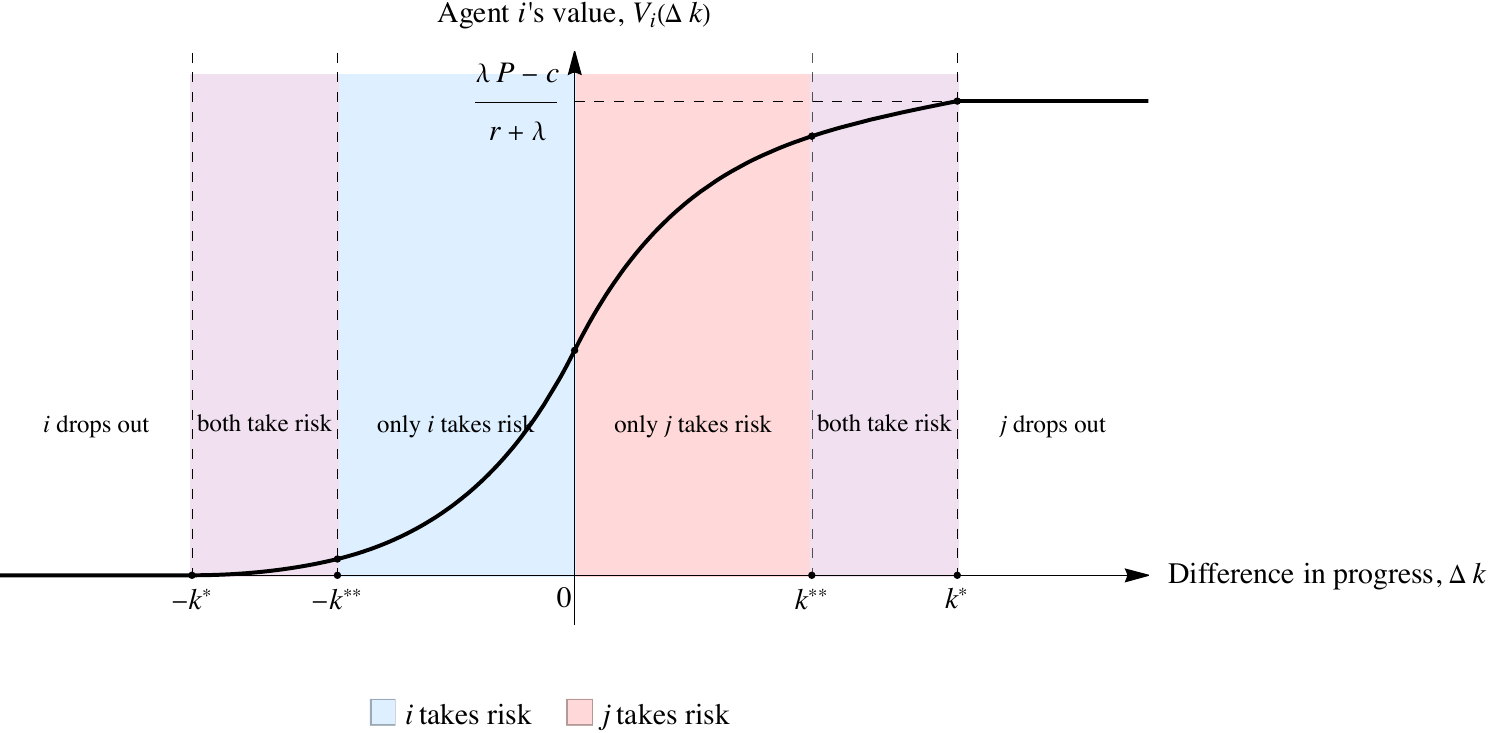}
\caption{Equilibrium with medium return: ``Hail Mary pass'' $+$ Preemption}\label{fig:M}
\end{figure}

\subsection{Uniqueness of the Three Types of Equilibria}

So far, we have identified three types of equilibria, with two motivations for taking the risky move. A   natural question is: Are these equilibria (essentially) unique within their respective case (low/medium/high)? 

In this subsection, we show that the answer is largely yes, that the previously constructed equilibria are unique within a reasonably large family of MPEs. Formally, call an MPE \textit{well-behaved} if it is a solution to the (Hamilton-Jacobi-)Bellman equations of the contest game. We begin by outlining two key observations that help characterize any well-behaved MPE of the contest game.

Suppose that the agents play according to some MPE $\kh{{k}^*,s^*}=\kh{{k}_i^*,s_i^*,{k}_j^*,s_j^*}$. Let $V_i\kh{\Delta{k}}$ denote agent $i$'s value function in the continuation region $-k_{i}^{*}<\Delta k<k_{j}^{*}$, where both agents choose to exert effort and stay in the contest.

The first observation is that the follower always takes the risky move as long as he chooses to stay in the contest.
\begin{Lem}\label{lem:followrisk}
In any well-behaved MPE, the follower always takes the risky move if he chooses to stay in the contest. Formally, $s_i^*\kh{\Delta{k}}=1$ almost everywhere on $\kh{-k^*_i,0}$.
\end{Lem}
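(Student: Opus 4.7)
The plan is to convert the follower's optimality into a sign condition on a combination of derivatives of the value function and to rule out any interval on which this sign is violated. In the follower region $\Delta k\in(-k_i^*,0)$, agent $i$ has hazard rate $0$ and agent $j$ has hazard rate $\lambda$ (with $i$ earning nothing upon $j$'s winning). It\^{o}'s lemma yields the HJB
\[
(r+\lambda)V_i(\Delta k)+c=\max_{a_i\in\{0,1\}}\Bigl\{(a_i-a_j^*)\pi V_i'(\Delta k)+\tfrac{a_i+a_j^*}{2}\sigma^2 V_i''(\Delta k)\Bigr\}.
\]
The gap between the two options equals $A(\Delta k):=\pi V_i'(\Delta k)+(\sigma^2/2)V_i''(\Delta k)$, so $s_i^*=1$ is optimal iff $A\ge 0$. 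Well-behavedness makes $V_i\in C^2$, so $\{A<0\}$ is open, and the lemma reduces to showing that this set contains no interval in $(-k_i^*,0)$.

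Two preliminary reductions cut down the cases. First, if $s_i^*=a_j^*=0$ on an interval, the HJB collapses to $V_i=-c/(r+\lambda)<0$, contradicting $V_i\ge 0$ (the drop-out option pays $0$); hence wherever the follower plays safe, the leader plays risky. Second, at the stopping boundary, value matching and smooth pasting give $V_i(-k_i^*)=V_i'(-k_i^*)=0$; substituting into the safe-action ODE $(\sigma^2/2)V_i''=(r+\lambda)V_i+c+\pi V_i'$ forces $V_i''(-k_i^*)=2c/\sigma^2$, so $A(-k_i^*)=c>0$. By continuity, no interval of safe play can touch $-k_i^*$.

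Now suppose for contradiction a maximal interval $(a,b)\subset(-k_i^*,0)$ has $A<0$. By the previous step $a>-k_i^*$; by continuity $A(a)=A(b)=0$, so $A$ attains its negative minimum at some interior $\Delta k^*$ with $A'(\Delta k^*)=0$. On $(a,b)$ (where $a_j^*=1$) the ODE gives $A=2\pi V_i'+(r+\lambda)V_i+c$ and $A'=2\pi V_i''+(r+\lambda)V_i'$; imposing $A'(\Delta k^*)=0$ and eliminating $V_i''$ via the ODE yields the closed form
\[
A(\Delta k^*)=\bigl[(r+\lambda)V_i(\Delta k^*)+c\bigr]\cdot\frac{(r+\lambda)\sigma^2-4\pi^2}{(r+\lambda)\sigma^2+4\pi^2}.
\]
The bracket is strictly positive, so $A(\Delta k^*)\ge 0$ whenever $4\pi^2\le(r+\lambda)\sigma^2$, contradicting $A(\Delta k^*)<0$; this closes the medium-return regime (where $\pi/\sigma<\sqrt{r+\lambda}/2$ by definition) and much of the low-return case.

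The main obstacle is the residual parameter range $4\pi^2>(r+\lambda)\sigma^2$, where the minimum-principle estimate is inconclusive. I would close it by a value-function comparison: the deviation value $V_i^{\text{dev}}$ obtained by playing risky throughout $(a,b)$ satisfies $\sigma^2(V_i^{\text{dev}})''=(r+\lambda)V_i^{\text{dev}}+c$ with the same Dirichlet data as $V_i$ at $a$ and $b$; using the explicit exponential solutions to the two linear ODEs, one shows $V_i^{\text{dev}}>V_i$ on $(a,b)$, contradicting optimality of $s_i^*=0$. Managing the bookkeeping across the two different ODEs, especially when their characteristic exponents are of comparable magnitude, is the technical heart of this last step.
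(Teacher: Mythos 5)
Your proposal has a genuine gap, and the step you yourself flag as ``the technical heart'' cannot work as described. On a maximal interval $(a,b)$ where the follower strictly prefers the safe action, the HJB equation certifies that $a_i=0$ is pointwise optimal, so by standard verification \emph{every} deviation --- in particular ``play risky until $\Delta k$ exits $(a,b)$, then revert'' --- is weakly worse. Your own equations confirm this: on $(a,b)$ one has $\sigma^2 V_i'' - (r+\lambda)V_i - c = \pi V_i' + \tfrac{1}{2}\sigma^2 V_i'' = A < 0$, so $V_i$ is a strict supersolution of the ODE solved by $V_i^{\mathrm{dev}}$, and the maximum principle for $\sigma^2 u'' - (r+\lambda)u$ with matching Dirichlet data yields $V_i^{\mathrm{dev}} \le V_i$ --- the opposite of the inequality you need. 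No contradiction can be extracted from the follower's own optimization alone, so the residual regime $4\pi^2 > (r+\lambda)\sigma^2$ (which contains all sufficiently negative $\pi$ in the low-return case and part of the high-return case) remains open. A secondary issue: your minimum-principle step assumes $A(b)=0$, which fails if the safe interval abuts $\Delta k=0$, where the hazard rate switches and $A$ need not vanish there.

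The missing ingredient is the \emph{leader's} problem combined with the participation constraint, and it replaces all of your sign analysis. The paper's proof is essentially one line: if $s_i^*(\Delta k)=0$ at some $\Delta k\in(-k_i^*,0)$, then in the leader's Bellman equation at the state $-\Delta k$ the feasible choice $a_j=0$ kills both the drift and the diffusion terms (since the follower is also safe), giving $rV_j \ge \lambda(P-V_j)-c$, i.e.\ $V_j \ge \frac{\lambda P-c}{r+\lambda}$, the value of winning the contest for sure. The follower therefore wins with probability zero from that state, so his continuation value is strictly negative (he still pays flow cost $c$), contradicting his decision to stay in. This argument is parameter-free and covers all three return regimes at once. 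Your first ``reduction'' is the same idea applied to the wrong player: instead of concluding merely that the leader must play risky wherever the follower plays safe, use the feasibility of the leader's safe response to bound the leader's value from below, and the lemma follows immediately.
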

To understand the intuition behind Lemma \ref{lem:followrisk}, suppose that the follower instead chooses to stay in the contest and play it safe. One possible response for the leader is to also play it safe, so the difference in the agents' progress will remain constant. Doing so guarantees that the leader wins, so the follower's continuation value will be strictly negative, contradicting the assumption that the follower chooses to stay!

The second observation characterizes the leader's risk attitude: he is strictly risk averse whenever the follower chooses to stay in the contest.
\begin{Lem}\label{lem:leadconcave}
In any well-behaved MPE, the leader's value function is strictly concave. Formally, $V_i''\kh{\Delta{k}}<0$ almost everywhere on $\kh{0,k^*_j}$.
\end{Lem}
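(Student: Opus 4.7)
The approach is to work with the leader's HJB equation on $(0, k_j^*)$, leveraging Lemma \ref{lem:followrisk} to fix the follower's action. Since the follower always plays risky on this region, the state $\Delta k = k_i - k_j$ evolves with drift $(a_i - 1)\pi$ and infinitesimal variance $(1+a_i)\sigma^2$ when the leader picks $a_i \in \{0,1\}$ (using $a_i^2 = a_i$ on $\{0,1\}$ and independence of $w_i, w_j$). Writing $\bar V := (\lambda P - c)/(r+\lambda)$ for the leader's value if he were alone, the HJB in the interior of $(0, k_j^*)$ reads
\[(r+\lambda)V_i(\Delta k) = \lambda P - c + \max_{a_i \in \{0,1\}} \left\{(a_i - 1)\pi V_i'(\Delta k) + \tfrac12(1+a_i)\sigma^2 V_i''(\Delta k)\right\}.\]

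\textbf{Step 1 (Strict upper bound $V_i < \bar V$).} The first step is to establish $V_i(\Delta k) < \bar V$ for every $\Delta k \in (0, k_j^*)$. Intuitively, $\bar V$ is what the leader would obtain with no rival, but on $(0, k_j^*)$ the follower is still playing and, because $\sigma > 0$, reaches the region $\Delta k \le 0$ and in fact wins the contest with strictly positive probability before dropping out. This strictly lowers the leader's continuation value relative to $\bar V$. I would formalize this through a pathwise/coupling argument comparing the actual dynamics against the fictitious ``leader alone'' dynamics.

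\textbf{Step 2 (Contradiction via the HJB).} Suppose, for contradiction, $V_i''(\Delta k_0) \ge 0$ at some $\Delta k_0 \in (0, k_j^*)$ where the HJB holds classically (the case at a.e.\ point in a well-behaved MPE). If the optimal action is $a_i^* = 1$, the HJB gives $(r+\lambda) V_i(\Delta k_0) = \lambda P - c + \sigma^2 V_i''(\Delta k_0) \ge \lambda P - c$, so $V_i(\Delta k_0) \ge \bar V$, contradicting Step 1. If $a_i^* = 0$, the optimality of safe over risky gives $-\pi V_i'(\Delta k_0) \ge \tfrac12 \sigma^2 V_i''(\Delta k_0) \ge 0$, and the safe-regime HJB $(r+\lambda)V_i(\Delta k_0) = \lambda P - c - \pi V_i'(\Delta k_0) + \tfrac12 \sigma^2 V_i''(\Delta k_0)$ then yields $V_i(\Delta k_0) \ge \bar V$ because both $-\pi V_i'(\Delta k_0)$ and $V_i''(\Delta k_0)$ are nonnegative—again a contradiction. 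This rules out $V_i'' \ge 0$ at every point where the HJB holds in the classical sense, i.e., almost everywhere on $(0, k_j^*)$.

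\textbf{Main obstacle.} Step 2 is essentially algebra; the real work is Step 1, turning the intuitive statement ``having a rival strictly hurts the leader'' into a rigorous strict inequality throughout the interior. One must control how the boundary equality $V_i(k_j^*) = \bar V$ propagates inward without collapsing, which I would handle via the pathwise argument together with the strong maximum principle applied to the elliptic HJB operator. Some technical care is also needed at boundary points between the safe and risky regimes (relevant in the medium-return case) where $V_i$ may fail to be $C^2$; these form a set of measure zero and are absorbed by the a.e.\ conclusion in the statement.
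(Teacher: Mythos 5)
Your Step 2 is exactly the paper's algebraic core: from $V_i''\ge 0$ and the HJB one deduces $V_i\ge \bar V=\frac{\lambda P-c}{r+\lambda}$ (the paper gets this in one line by noting $a_i=1$ is feasible, so $rV_i\ge \lambda(P-V_i)-c+\sigma^2V_i''\ge\lambda(P-V_i)-c$; your two-case split over $a_i^*$ reaches the same inequality). The gap is Step 1, which you correctly flag as carrying all the weight but do not prove: you assert $V_i<\bar V$ on the interior and propose to establish it by a pathwise/coupling comparison against the ``leader alone'' dynamics plus the strong maximum principle. That route is both unexecuted and harder than necessary, and it has a genuine difficulty: the coupling comparison must track not just the leader's own success intensity but the entire interaction (the follower's positive winning probability, the possibility of losing the lead), so ``having a rival strictly hurts'' is not a routine monotone-coupling statement.

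The paper closes the loop without ever proving $V_i<\bar V$ as a standalone fact. It observes that $\bar V$ is the value of winning the contest \emph{for sure}, so $V_i\ge\bar V$ forces the follower's probability of winning from the state $-\Delta k$ to be zero. But a follower who never wins pays the flow cost $c$ and collects nothing, so $V_j(-\Delta k)<0$ --- contradicting the fact that in the continuation region the follower chooses to stay in rather than drop out for a payoff of zero. In other words, the contradiction is delivered by the \emph{follower's} participation constraint, not by a quantitative upper bound on the \emph{leader's} value. This is the same device used in the proof of Lemma \ref{lem:followrisk}, and it is the one-line argument you are missing. If you replace your Step 1 with it, your proof is complete and essentially identical to the paper's; as written, the proposal rests on an unproven (and over-engineered) intermediate claim.
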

The intuition behind Lemma \ref{lem:leadconcave} is as follows. Suppose that the leader's value function is weakly convex over a range. We know from the previous Lemma \ref{lem:followrisk} that the follower takes the risky move, and one possible response of the leader is to also take the risky move. Doing so ensures that the leader is not affected by the mean term, while the variance term helps the leader ($\sigma^2 V_i''\ge 0$). As an implication, the leader's continuation value is at least as good as the value of winning the contest for sure, again contradicting the assumption that follower chooses to stay in!

Combining the above two observations, we greatly simplify the agents' strategies in any well-behaved MPE. Specifically, in any  well-behaved MPE, Lemma \ref{lem:followrisk} shows that the follower always takes the risky move, while Lemma \ref{lem:leadconcave} shows that the leader's behavior switches at most once. Therefore, any well-behaved MPE is completely specified by a pair of stopping boundaries $\left(k_{i}^{*}, k_{j}^{*}\right)$, and possibly with a pair of switching points $\left(k_{i}^{**}, k_{j}^{**}\right)$. By further deriving the best response functions about $\left(k_{i}^{*}, k_{i}^{**}, k_{j}^{*},k_{j}^{**}\right)$ from the boundary conditions of value matching and smooth pasting, we show that
there exists a unique well-behaved MPE in each of the three cases, and it is symmetric. This strengths the previous Propositions \ref{prop:L}, \ref{prop:H} and \ref{prop:M}, respectively, into the following forms.

\begin{Propprime}{prop:L}[Low return case]\label{prop:Lprime}
If the risky move has a low return, then there exists a unique well-behaved MPE and it is symmetric. Formally, the following strategy profile is the unique well-behaved MPE:
$$\text{Drop out iff }\Delta{k}\le -k^*, \text{ and otherwise } s_i^*\kh{\Delta{k}}=1\text{ iff }\Delta{k}<0, \text{ for some }k^*>0.$$
\end{Propprime}

\begin{Propprime}{prop:H}[High return case]\label{prop:Hprime}
If the risky move has a high return, then there exists a unique well-behaved MPE and it is symmetric. Formally, the following strategy profile is the unique well-behaved MPE:
$$\text{Drop out iff }\Delta{k}\le -k^*, \text{ and otherwise } s_i^*\kh{\Delta{k}}=1,\text{ for some }k^*>0.$$	
\end{Propprime}

\begin{Propprime}{prop:M}[Medium return case]\label{prop:Mprime}
If the risky move has a medium return, then there exists a unique well-behaved MPE and it is symmetric. Formally, the following strategy profile is the unique well-behaved MPE: 
$$\text{Drop out iff }\Delta{k}\le -k^*, \text{ and otherwise } s_i^*\kh{\Delta{k}}=1\text{ iff }(\Delta{k}<0\text{ or }\Delta{k}\ge k^{**}),\text{ for some }k^*>k^{**}>0.$$	
\end{Propprime}

In the analysis so far, we have shown that the contest game has a unique well-behaved MPE, and the equilibrium strategies are as characterized in the previous subsections. We are currently working on a natural follow-up question: Are all MPEs well-behaved? In other words, does there exist an MPE that does not satisfy the Bellman equations? While we cannot provide a concrete answer to this question right now, we conclude this section with some conjectures and the possible methods of verifying them.

\cite{Harris93}  develops a theory of stochastic differential games in one dimension in which players' actions are allowed to affect the state in a very general way. The paper proves that all MPEs satisfy the Bellman equations \citep[Theorems 5.6 and 6.5]{Harris93}, as long as the regularity conditions in the paper are satisfied. The baseline model in our paper largely fits the general framework by \cite{Harris93}, with only one exception: if both agents choose to play it safe, then there is completely no variance in the system. This makes the analysis tricky because the order of the associated ODEs  may change suddenly.

In subsequent research, we hope to prove analogous results to \citet[Theorems 5.6 and 6.5]{Harris93}, in order to show that all MPEs in our contest model are well-behaved.

%\newpage

\section{Optimal Prize Allocation}\label{sec:design}
In the previous section, we have focused on a stylized model and identified  two motivations for taking the risky move: the Hail Mary motive and the preemptive motive. This section investigates the impact of such behavior on the optimal prize allocation, by analyzing an extended version of the baseline model that adapts all the derivations and  intuitions in the previous section. The main result for this section is Proposition \ref{prop:optprize}, which shows that ``winner-takes-all'' is optimal only if the designer's budget is small.

\subsection{Extending Baseline Model}\label{subsec:ext}
We first extend the baseline model to consider the role of prize allocation. Assume that the follower in the contest also has a positive hazard rate of success: that is, the leader succeeds at rate $\overline{\lambda}$, and the follower succeeds at rate  $\underline{\lambda}$, with $\overline{\lambda}>\underline{\lambda} \geq 0$. In other words, when both agents exert effort, agent $i$'s hazard rate of success is given by
$$h\kh{k_i,k_{j}}=\Brace{&\overline{\lambda},&&k_i\ge k_j,\\&\underline{\lambda},&&{k_i<k_j}.}$$
Moreover, we allow the loser to also have a positive prize at the end of the contest: the winner gets $\overline{P}$ and the loser gets $\underline{P}$, with $\overline{P}\ge \underline{P}\ge 0$.

The derivations and intuitions in Section \ref{sec:MPE} are directly applicable, as long as we replace $\phi=\lambda P/c$ with the generalized version of the \emph{profitability of contest}:
\eqn{\overline{\phi}\equiv\frac{\kh{\overline \lambda-\underline{\lambda}}\kh{ \overline{P}-\underline{P}}}{c-\kh{\underline\lambda\overline{P}+\overline{\lambda}\underline{P}}}.\label{eqn:prof}}
The definition of the three cases (Definition \ref{def:case}) is adapted as follows:
\begin{Defprime}{def:case}[{Low/medium/high} return]\label{def:caseprime}\
\begin{enumerate}
\item The risky move has a \emph{low return} if $$\pi\le 0.$$
\item The risky move has a \emph{medium return} if\footnote{Recall that the closed-form expression of $f\kh{\cdot}$ is as follows:$$f\kh{\overline\phi}=\frac{\sqrt{2\overline\phi\kh{\sqrt{\overline\phi^2+8}+\overline\phi}-8}}{\sqrt{\overline\phi^2+8}+\overline\phi}.$$} $$0<\frac{\pi}{\sigma}<\frac{\sqrt{r+\overline{\lambda}+\underline{\lambda}}}{2}\cdot f\kh{\overline{\phi}}.$$
\item The risky move has a \emph{high return} if $$\frac{\pi}{\sigma}\ge\frac{\sqrt{r+\overline{\lambda}+\underline{\lambda}}}{2}\cdot f\kh{\overline\phi}.$$	
\end{enumerate}
\end{Defprime}
With this adapted  definition, the three types of MPEs become just as outlined in the previous section, parametrized by $k^*$ and potentially $k^{**}$.

\subsection{A Contest Design Exercise}\label{subsec:design}
Suppose that the designer can choose any pair of prizes $\kh{\overline{P},\underline{P}}$ subject to some fixed budget constraint $\overline{P}+\underline{P}\le B$. Assume that the budget $B$ satisfies $\overline{\lambda} B>c>\underline{\lambda}B$. Note that if $\overline{\lambda} B<c$, the maximum prize that can be awarded to the leader is not enough for him to recover the cost, and thus both agents will drop out immediately; if $\underline{\lambda} B>c$, then both agents will find it worthwhile to stay in the contest forever no matter how the prize is split, so the design problem becomes trivial.

We think it is reasonable for the designer to have any of the following three objectives:
\begin{enumerate}
\item Minimize the expected time needed to achieve success (e.g., innovation race).
\item Maximize the expected time that the follower stays in the contest.
\item Maximize the continuation region (parametrized by $k^*$).
\end{enumerate} 
It turns out that the three objectives are equivalent in our contest model.
\begin{Prop}[Equivalence of objectives]\label{prop:equivobj}
The following objectives of the designer are equivalent:
\begin{enumerate}
\item\label{item:1} Minimize the expected time needed to achieve success.
\item\label{item:2} Maximize the expected time that the follower stays in the contest.
\item\label{item:3} Maximize the continuation region.
\end{enumerate}
\end{Prop}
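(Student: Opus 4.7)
The plan is to show that all three objectives are strictly monotone transformations of the profitability $\overline\phi$ defined in~\eqref{eqn:prof}, so they agree on the designer's argmax. Throughout, fix the non-prize parameters and note that the allocation $\kh{\overline P,\underline P}$ influences the equilibrium only through $\overline\phi$. Let $\tau_D$ denote the first time $\abs{\Delta k}$ hits the stopping boundary $k^*$, $\tau$ the success time, and write $\Lambda\equiv\overline\lambda+\underline\lambda$, $T_S\equiv\bbE{\tau}$ (Objective~\ref{item:1}), and $T_F\equiv\bbE{\min\kh{\tau_D,\tau}}$ (Objective~\ref{item:2}).

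\emph{Step 1 (\ref{item:1} $\Leftrightarrow$ \ref{item:2}).} Success arrives at rate $\Lambda$ on $\fkh{0,\tau_D}$ and at rate $\overline\lambda$ afterward, so the strong Markov property and memoryless property of the Poisson process give $T_S=T_F+\pp{\tau>\tau_D}/\overline\lambda$. Since the success process is independent of the diffusion $\Delta k$, $\pp{\tau>\tau_D}=\bbE{e^{-\Lambda\tau_D}}$, and a short integration by parts yields $\bbE{e^{-\Lambda\tau_D}}=1-\Lambda T_F$. Combining,
\eqns{T_S=\frac{1}{\overline\lambda}-\frac{\underline\lambda}{\overline\lambda}\,T_F,}
which is strictly affinely decreasing in $T_F$ whenever $\underline\lambda>0$ (the boundary case $\underline\lambda=0$ collapses $T_S$ to the constant $1/\overline\lambda$ and renders Objective~\ref{item:1} vacuous). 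Thus minimizing $T_S$ is equivalent to maximizing $T_F$.

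\emph{Step 2 (\ref{item:2} $\Leftrightarrow$ \ref{item:3}).} The comparative statics extending those noted after Proposition~\ref{prop:L} imply that $k^*$ is strictly increasing in $\overline\phi$, so Objective~\ref{item:3} reduces to maximizing $\overline\phi$. For Objective~\ref{item:2}, let $U\kh{x}=\mathbb{E}_x\fkh{\min\kh{\tau_D,\tau}}$, so $T_F=U\kh{0}$; by Feynman--Kac, $U$ solves the piecewise second-order ODE
\eqns{\tfrac{1}{2}\sigma^2\kh{x}U''\kh{x}+\mu\kh{x}U'\kh{x}-\Lambda U\kh{x}+1=0,\quad U\kh{\pm k^*}=0,}
with the drift and volatility $\kh{\mu,\sigma}$ piecewise constant and dictated by the case-specific action rules of Propositions~\ref{prop:Lprime}--\ref{prop:Mprime}, and with $U,U'$ continuous at interior switching points. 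Solving the ODE in closed form within each case (exponentials where the leader plays risky, linear-plus-exponential terms otherwise) and differentiating in $\overline\phi$ shows that $T_F$ is strictly increasing in $\overline\phi$ within each regime; the continuity of $k^*$, $k^{**}$, and $T_F$ in $\overline\phi$ at the L/M and M/H boundaries (see the footnote to Proposition~\ref{prop:M}) then glues these within-case results into global strict monotonicity of $T_F$. Combined with Step~1 this establishes equivalence of all three objectives.

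\emph{Anticipated obstacle.} The main hurdle is uniform strict monotonicity of $T_F$ across the medium-return regime, where the equilibrium is parametrized by two quantities $\kh{k^*,k^{**}}$ that co-move with $\overline\phi$. A pure coupling argument handles a change in $k^*$ at fixed $k^{**}$, but the simultaneous shift in $k^{**}$ alters the interior drift profile and is not immediately sign-definite. I expect to resolve this by direct analytic differentiation of the closed-form $U\kh{0}$ with respect to $\overline\phi$, or more elegantly by an envelope-type argument that exploits the smooth-pasting conditions defining $\kh{k^*,k^{**}}$, so that the first-order contributions of perturbations in $k^{**}$ vanish and only the boundary effect in $k^*$ remains.
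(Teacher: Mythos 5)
Your Step 1 is correct and is actually tighter than the paper's treatment of the same link: the paper computes the expected success time as a sum of two Laplace-transform terms, $\kh{1-\Psi\kh{\overline{\lambda}+\underline{\lambda}}}/\kh{\overline{\lambda}+\underline{\lambda}}+\Psi/\overline{\lambda}$, and argues monotonicity in $k^*$ term by term, whereas your exact affine identity $T_S=\frac{1}{\overline{\lambda}}-\frac{\underline{\lambda}}{\overline{\lambda}}T_F$ delivers \ref{item:1}$\Leftrightarrow$\ref{item:2} in one stroke and makes explicit the degeneracy at $\underline{\lambda}=0$ that the paper leaves implicit.

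Step 2 is where you depart from the paper, and where your argument is not yet a proof. The paper never solves a Feynman--Kac ODE and never routes the equivalence through $\overline{\phi}$: it uses the identity $T_F=\kh{1-\bbE{e^{-\kh{\overline{\lambda}+\underline{\lambda}}\tau_D}}}/\kh{\overline{\lambda}+\underline{\lambda}}$ --- which you already derived in Step 1 --- together with the observation that, since $\Delta k$ has continuous sample paths, $\tau_{k^*}$ is pathwise nondecreasing in $k^*$ (to exit $[-k_2^*,k_2^*]$ the path must first exit $[-k_1^*,k_1^*]$), so the Laplace transform is strictly decreasing in $k^*$ and \ref{item:2}$\Leftrightarrow$\ref{item:3} follows with essentially no computation. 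By contrast, your plan hinges on showing $\partial T_F/\partial\overline{\phi}>0$ in the medium regime, which is precisely the step you concede is not sign-definite and defer to an unexecuted closed-form or envelope calculation; as written, the conclusion is asserted rather than established. The obstacle you flag is genuine, and in fairness it is glossed over by the paper as well: in the medium-return case the law of $\Delta k$ itself varies with the allocation (through $k^{**}$, and through the possibility that the regime classification flips as $f\kh{\overline{\phi}}$ moves), so the paper's pathwise domination compares hitting times of processes that do not live on a common probability space. If you adopt the paper's route you still owe a small additional argument there --- e.g., a two-stage stochastic comparison: first $\tau_{k_2^*}\ge\tau_{k_1^*}$ pathwise under the new dynamics, then a comparison of the hitting time of the common level $k_1^*$ under the two piecewise drift/volatility profiles --- but that is far lighter than differentiating the closed-form solution of the piecewise ODE in $\overline{\phi}$.
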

To understand the proof Proposition \ref{prop:equivobj}, let $\tau_{k^{*}} \equiv \inf \left\{t \geq 0:\abs{k_i\kh{t}-k_j\kh{t}} \geq k^{*}\right\}$ denote the first hitting time to the stopping boundary. In all of the three types of equilibria, $k_i-k_j$ has almost surely continuous sample paths. Therefore, $\tau_{k^*}$ is strictly increasing in $k^*$ in the sense of first-order stochastic dominance, which shows that Objectives \ref{item:2} and \ref{item:3} are equivalent. Moreover, success occurs at rate $\kh{\overline{\lambda}+\underline{\lambda}}$ when both agents stay in the contest, and at a slower rate $\overline{\lambda}$ when the follower drops out. This indicates that Objectives \ref{item:1} and \ref{item:3} are also equivalent.

We are now ready to solve for the designer's optimal prize allocation. Proposition \ref{prop:optprize} shows that {``winner-takes-all''} is optimal only if the designer's budget is {small}.
\begin{Prop}[Optimal prize allocation]\label{prop:optprize}\  
\begin{enumerate}
\item If $\kh{\overline{\lambda}+\underline{\lambda}} B/2<c$, then it is {optimal} to set $\overline{P}=B$ and $\underline{P}=0$.
\item If $\kh{\overline{\lambda}+\underline{\lambda}} B/2> c$, then it is optimal to set $\overline{P}=B/2+\epsilon$ and $\underline{P}=B/2-\epsilon$, with sufficiently small $\epsilon>0$.
\end{enumerate}
\end{Prop}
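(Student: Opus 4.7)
The plan is to use Proposition \ref{prop:equivobj} to reduce the design problem to maximizing the stopping boundary $k^*$, and then handle the two cases separately. As a preliminary step, I would establish that $k^*$ is strictly increasing in the profitability $\overline{\phi}$ defined in \eqref{eqn:prof}, whenever $\overline{\phi}>1$, by inspecting the implicit equations that pin down $k^*$ in the proofs of Propositions \ref{prop:L}, \ref{prop:H} and \ref{prop:M}. Moreover, whenever $\underline{\lambda}\overline{P}+\overline{\lambda}\underline{P}\ge c$ (equivalently, the denominator of $\overline{\phi}$ in \eqref{eqn:prof} is nonpositive), the follower's net flow payoff is nonnegative, so his continuation value stays strictly positive everywhere and he never drops out, yielding $k^*=\infty$.

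For Case 1, where $(\overline{\lambda}+\underline{\lambda})B/2<c$, every budget-feasible allocation satisfies $\underline{\lambda}\overline{P}+\overline{\lambda}\underline{P}\le(\overline{\lambda}+\underline{\lambda})B/2<c$, so the follower must drop out in finite time and the design problem reduces to maximizing $\overline{\phi}$. Reparametrizing with $S\equiv\overline{P}+\underline{P}$ and $\Delta\equiv\overline{P}-\underline{P}$ gives
\begin{equation*}
\overline{\phi}=\frac{(\overline{\lambda}-\underline{\lambda})\Delta}{c-\tfrac{1}{2}(\overline{\lambda}+\underline{\lambda})S+\tfrac{1}{2}(\overline{\lambda}-\underline{\lambda})\Delta}.
\end{equation*}
The denominator is strictly positive throughout the feasible region, and elementary calculus yields $\partial_S\overline{\phi}>0$ and $\partial_\Delta\overline{\phi}>0$. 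Hence the optimum sits at the corner $S=\Delta=B$, i.e., $\overline{P}=B$ and $\underline{P}=0$.

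For Case 2, where $(\overline{\lambda}+\underline{\lambda})B/2>c$, the proposed allocation $\overline{P}=B/2+\epsilon$, $\underline{P}=B/2-\epsilon$ yields $\underline{\lambda}\overline{P}+\overline{\lambda}\underline{P}=(\overline{\lambda}+\underline{\lambda})B/2-(\overline{\lambda}-\underline{\lambda})\epsilon$, which strictly exceeds $c$ whenever $0<\epsilon<[(\overline{\lambda}+\underline{\lambda})B/2-c]/(\overline{\lambda}-\underline{\lambda})$. By the preliminary step, $k^*=\infty$ under this allocation, which trivially attains the global maximum of the designer's objective (equivalently, it achieves the minimum possible expected time to success, $1/(\overline{\lambda}+\underline{\lambda})$, since both agents always exert effort).

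The main obstacle will be the preliminary monotonicity claim. While the economic intuition is clear, formally showing $\partial k^*/\partial\overline{\phi}>0$ across the low, medium and high return regimes requires revisiting the value-matching and smooth-pasting systems from the proofs of Propositions \ref{prop:L}, \ref{prop:H} and \ref{prop:M} and checking signs of the relevant implicit derivatives. A secondary subtlety is that the designer's choice of $(\overline{P},\underline{P})$ may move the contest across these regimes, so the monotonicity must be verified to glue continuously across transitions; one must also confirm that $k^*\to\infty$ as the denominator of $\overline{\phi}$ approaches zero from above, so that this limiting behavior joins smoothly with the $k^*=\infty$ regime invoked in Case 2.
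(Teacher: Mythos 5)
Your proposal is correct and follows essentially the same route as the paper: Proposition \ref{prop:equivobj} reduces the problem to maximizing $\overline{\phi}$ over the budget set; your Case 1 is the same elementary optimization (you reparametrize in $(S,\Delta)$ and check coordinate-wise monotonicity where the paper verifies a single inequality chain directly, with identical content); and your Case 2 matches the paper's observation that a near-equal split makes the denominator $c-\kh{\underline\lambda\overline{P}+\overline{\lambda}\underline{P}}$ nonpositive, so that the follower never drops out (the paper phrases this as $\overline{\phi}=\infty$). The preliminary step you flag as the main obstacle --- that $k^*$ is strictly increasing in $\overline{\phi}$, and that this monotonicity glues across the low/medium/high regimes as the allocation varies --- is used but not proved in the paper either (it is asserted from the first-order condition \eqref{eqn:FOC}), so you are, if anything, more explicit about that step than the source.
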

The intuition behind Proposition \ref{prop:optprize} is as follows. With a small budget,  it is unaffordable to keep both agents in the contest forever. So the designer chooses $\kh{\overline{P},\underline{P}}$ to maximize the continuation region parameterized by $k^*$, which is strictly increasing in the profitability of contest $\overline{\phi}$ defined by equation \eqref{eqn:prof}. Solving this maximization problem yields $\overline{P}=B$ and $\underline{P}=0$, i.e., ``winner-takes-all''.  However, this always induces a finite stopping boundary $k^*$, and becomes suboptimal when there exists some form of allocation that keeps both agents in the contest forever.

%\newpage

\section{Conclusion}\label{sec:concln}
In this paper, we study contests with stochastic progress induced by risky moves.  Agents exert effort and compete to get ahead; they have the option to make progress deterministically, but can also undertake risky moves that stochastically affect their relative position. We demonstrate three types of equilibria, and identify two motivations for taking the risky move: the Hail Mary motive of the follower and the preemptive motive of the leader. Moreover, as an impact of such equilibrium behavior on the optimal prize allocation, ``winner-takes-all'' is optimal only if the designer's budget is small.

\bibliographystyle{myref}
\bibliography{HMP.bib}

\newpage
\begin{appendices}
\setstretch{1}

\renewcommand{\theequation}{\thesection.\arabic{equation}}
\setcounter{equation}{0}
\renewcommand{\theLem}{\thesection.\arabic{Lem}}
\setcounter{Lem}{0}
\renewcommand{\theDef}{\thesection.\arabic{Def}}
\setcounter{Def}{0}
\renewcommand{\theProp}{\thesection.\arabic{Prop}}
\setcounter{Prop}{0}

\section{Proofs of Results in the Main Text}\label{app:proof}
\subsection{Proofs for Section \ref{sec:MPE}}
Suppose that agent $j$ plays according to some Markov strategy $\kh{\hat{k}_j,s_j}$, and fix any Markov strategy of player $i$, $\kh{\hat{k}_i,s_i}$. In the continuation region $-\hat{k}_i<\Delta{k}<\hat{k}_j$, agent $i$'s value function $V_i\kh{\Delta{k}}$ satisfies the following (Hamilton-Jacobi-)Bellman equation (assuming well-behaved) 
$$r V_i\,dt=\Brace{&\max_{a_i}\hkh{\fkh{\lambda\kh{P-V_i}-c}\,dt+\mathbb{E}\fkh{dV_i}},&&0\le \Delta{k}<\hat{k}_j,\\&\max_{a_i}\hkh{\fkh{\lambda\kh{0-V_i}-c}\,dt+\mathbb{E}\fkh{dV_i}},&&-\hat{k}_i<\Delta{k}<0.}$$
Since $\Delta{k}=k_i-k_j$ and $dk_i=(1+a_i\pi)\,dt+a_i\sigma\, dw_i$, It\^{o}'s lemma implies that
\eqns{&dV_i=V_i'\,d k_{i}-V_i'\,d k_{j}+\kh{\frac{1}{2}a_i^2\sigma^2V_{i}''+\frac{1}{2}a_j^2\sigma^2V_{i}''}\,dt\\
\quad\Rightarrow\quad&\mathbb{E}\fkh{dV_i}=\fkh{\kh{a_i-a_j}\pi V_i'+\frac{1}{2}\kh{a_i^2+a_j^2}\sigma^2V_i''}\,dt.}
We can thus rewrite the Bellman equation as 
\eqn{r V_i=\Brace{&\max_{a_i}\hkh{{\lambda\kh{P-V_i}-c}+\kh{a_i-s_j}\pi V_i'+\frac{1}{2}\kh{a_i^2+s_j^2}\sigma^2V_i''},&&0\le \Delta{k}<\hat{k}_j,\\&\max_{a_i}\hkh{{\lambda\kh{0-V_i}-c}+\kh{a_i-s_j}\pi V_i'+\frac{1}{2}\kh{a_i^2+s_j^2}\sigma^2V_i''},&&-\hat{k}_i<\Delta{k}<0,}\label{eqn:Bellman}}
which combines the expected current flow payoff plus the expected change of future payoff due to the drift and volatility of the continuation value. Therefore, agent $i$'s best-responding $s_i^*\kh{\cdot}$ satisfies\eqn{s_i^*=\Brace{&1,&& \pi V_i'+\frac{1}{2}\sigma^2V_i''>0,\\
&\text{indifferent},&&\pi V_i'+\frac{1}{2}\sigma^2V_i''=0,\\
&0,&&\pi V_i'+\frac{1}{2}\sigma^2V_i''<0.}\label{eqn:BRi}}
Agent $i$ solves the differential equation \eqref{eqn:Bellman} by setting \eqn{V_i\kh{\hat{k}_j}=\frac{\lambda P-c}{r+\lambda},\label{eqn:valuematch}}
and \eqn{V_i\kh{-{k}_i^*}=V_i'\kh{-{k}_i^*}=0.\label{eqn:smoothpast}}
These value-matching and smooth-pasting conditions are necessary because it follows from equation \eqref{eqn:payoff} that the value of winning the contest for sure is 
$$\mathbb{E}_{\tau}\left[e^{-r \tau } q-\frac{c}{r}\left(1-e^{-r \tau}\right)\right]=\int_{0}^{\infty}\left[e^{-r \tau } q-\frac{c}{r}\left(1-e^{-r \tau}\right)\right] \lambda e^{-\lambda \tau} d \tau=\frac{\lambda P-c}{r+\lambda},$$ and the value of dropping out immediately is zero.
\begin{proof}[Proof of Proposition \ref{prop:L}]
Consider the candidate equilibrium strategy: $$\text{Drop out iff }\Delta{k}\le -k^*, \text{ and otherwise } s_i^*\kh{\Delta{k}}=1\text{ iff }\Delta{k}<0.$$
Given the candidate equilibrium strategy, we can rewrite the Bellman equation as
$$rV_i=\Brace{&\lambda \kh{P-V_i}-c-\pi V_i'+\frac{1}{2}\sigma^2V_i'',&&0\le\Delta{k}< k^*,\\
&\lambda \kh{0-V_i}-c+\pi V_i'+\frac{1}{2}\sigma^2V_i'',&&-k^*<\Delta{k}<0.}$$
Solving the above second-order linear ODEs by imposing the boundary conditions \eqref{eqn:valuematch} and \eqref{eqn:smoothpast}, we obtain
$$V_i\kh{\Delta{k}}=\Brace{&\frac{\lambda P-c}{r+\lambda}+C_1e^{-\xi_+\Delta{k}}+C_2e^{-\xi_-\Delta{k}},&&0\le\Delta{k}< k^*,\\
&-\frac{c}{r+\lambda}+C_3e^{\xi_+\Delta{k}}+C_4e^{\xi_-\Delta{k}},&&-k^*<\Delta{k}<0,}$$
where $\xi_+>0$ and $\xi_-<0$ are the two roots of the characteristic equation:
\eqn{r=-\lambda+\pi\xi+\frac{1}{2}\sigma^2\xi^2\quad\Rightarrow\quad\xi_\pm=\frac{-{\pi}\pm\sqrt{{\pi}^2+2\kh{r+\lambda}\sigma^2}}{\sigma^2},\label{eqn:xi}}
and $k^*>0$ is the unique positive solution to the first-order condition:
\eqn{\kh{\xi_++\xi_-}^2e^{\kh{\xi_++\xi_-}k^*}-2\xi_+\xi_-\kh{ \xi_+e^{2\xi_+k^*}+\xi_-e^{2\xi_-k^*}}=\phi \kh{\xi_+-\xi_-}\kh{ \xi_+e^{\xi_+k^*}-\xi_-e^{\xi_-k^*}}.\label{eqn:FOC}}

Now we check that each agent does not want to deviate from the candidate equilibrium strategy:
\begin{enumerate}
\item Agent $i$ prefers $a_i=0$ to $a_i=1$ when leading ($\Delta{k}>0$)
$$\pi V_i'+\frac{1}{2}\sigma^2V_i''\le 0.$$
Note that
$$\pi V_i'\kh{k^*}+\frac{1}{2}\sigma^2V_i''\kh{k^*}\le 0\quad\Leftrightarrow\quad \pi\le 0,$$
which holds by assumption. Moreover, the second-order condition $\pi V_i''+\frac{1}{2}\sigma^2V_i'''\ge 0$ holds throughout the range.
\item Agent $i$ prefers $a_i=1$ to $a_i=0$ when following ($\Delta{k}<0$)
$$\pi V_i'+\frac{1}{2}\sigma^2V_i''\ge 0.$$
Note that
$$\pi V_i'\kh{-k^*}+\frac{1}{2}\sigma^2V_i''\kh{-k^*}\ge 0\quad\Leftrightarrow\quad c\ge 0,$$
which holds by assumption. Moreover, the second-order condition $\pi V_i''+\frac{1}{2}\sigma^2V_i'''\ge 0$ holds throughout the range.
\end{enumerate}
This completes the proof.
\end{proof}

\noindent The comparative statics results presented after Proposition \ref{prop:L} can be derived from the first-order condition  \eqref{eqn:FOC}.

\begin{proof}[Proof of Proposition \ref{prop:H}]
Consider the candidate equilibrium strategy: $$\text{Drop out iff }\Delta{k}\le -k^*, \text{ and otherwise } s_i^*\kh{\Delta{k}}=1.$$	
Given the candidate equilibrium strategy, we can rewrite the Bellman equation as
$$rV_i=\Brace{&\lambda \kh{P-V_i}-c+\sigma^2V_i'',&&0\le\Delta{k}< k^*,\\
&\lambda \kh{0-V_i}-c+\sigma^2V_i'',&&-k^*<\Delta{k}<0.}$$
Solving the above second-order linear ODEs by imposing the boundary conditions \eqref{eqn:valuematch} and \eqref{eqn:smoothpast},, we obtain
$$V_i\kh{\Delta{k}}=\Brace{&\frac{\lambda P-c}{r+\lambda}+C_{1} e^{-\eta \Delta k}+C_{2} e^{\eta \Delta k},&&0\le\Delta{k}< k^*,\\
&-\frac{c}{r+\lambda}+C_{3} e^{\eta \Delta k}+C_{4} e^{-\eta \Delta k},&&-k^*<\Delta{k}<0,}$$
where $\eta>0$ is the positive root of the characteristic equation:
\eqn{r=-\lambda+\sigma^2\eta^2\quad\Rightarrow\quad\eta=\frac{\sqrt{r+\lambda}}{\sigma},\label{eqn:eta}}
and $k^*>0$ is the unique positive solution to the first-order condition:\footnote{The quartic equation \eqref{eqn:FOC2} gives rise to the following closed-form solution:
\eqns{k^*=\frac{1}{\eta}\log\frac{\kh{\sqrt{\phi^2+8}+\phi}+\sqrt{2\phi\kh{\sqrt{\phi^2+8}+\phi}-8}}{4}}.}
\eqn{e^{2\eta k^*}+e^{-2\eta k^*}=\phi\left(e^{\eta k^{*}}+e^{-\eta k^{*}}\right). \label{eqn:FOC2}}
Now we check that each agent does not want to deviate from the candidate equilibrium strategy:
\begin{enumerate}
\item Agent $i$ prefers $a_i=1$ to $a_i=0$ when leading ($\Delta{k}>0$)
$$\pi V_i'+\frac{1}{2}\sigma^2V_i''\ge 0.$$
Note that
$$\pi V_i'\kh{k^*}+\frac{1}{2}\sigma^2V_i''\kh{k^*}\ge 0\quad\Leftrightarrow\quad \pi\ge 0,$$
and that
$$\pi V_i'\kh{0+}+\frac{1}{2}\sigma^2V_i''\kh{0+}\ge 0\quad\Leftrightarrow\quad \frac{\pi}{\sigma}\ge\frac{\sqrt{r+\lambda}}{2}\cdot\frac{\sqrt{2\phi\kh{\sqrt{\phi^2+8}+\phi}-8}}{\sqrt{\phi^2+8}+\phi}=\frac{\sqrt{r+\lambda}}{2}\cdot f\kh{\phi},$$
both of which hold by assumption. 
\item Agent $i$ prefers $a_i=1$ to $a_i=0$ when following ($\Delta{k}<0$)
$$\pi V_i'+\frac{1}{2}\sigma^2V_i''\ge 0.$$
Note that
$$\pi V_i'\kh{-k^*}+\frac{1}{2}\sigma^2V_i''\kh{-k^*}\ge 0\quad\Leftrightarrow\quad c\ge 0,$$
which holds by assumption. Moreover, the second-order condition $\pi V_i''+\frac{1}{2}\sigma^2V_i'''\ge 0$ holds throughout the range.
\end{enumerate}
This completes the proof.
\end{proof}

\begin{proof}[Proof of Proposition \ref{prop:M}]
Consider the candidate equilibrium strategy: 
$$\text{Drop out iff }\Delta{k}\le -k^*, \text{ and otherwise } s_i^*\kh{\Delta{k}}=1\text{ iff }(\Delta{k}<0\text{ or }\Delta{k}\ge k^{**}).$$	
Given the candidate equilibrium strategy, we can rewrite the Bellman equation as
$$rV_i=\Brace{&\lambda \kh{P-V_i}-c+\sigma^2V_i'',&&k^{**}\le\Delta{k}< k^*,\\
&\lambda \kh{P-V_i}-c-\pi V_i''+\frac{1}{2}\sigma^2V_i'',&&0\le\Delta{k}< k^{**},\\
&\lambda \kh{0-V_i}-c+\pi V_i''+\frac{1}{2}\sigma^2V_i'',&&-k^{**}\le \Delta{k}<0,\\
&\lambda \kh{0-V_i}-c+\sigma^2V_i'',&&-k^*<\Delta{k}<-k^{**}.}
$$
Solving the above second-order linear ODEs by imposing the boundary conditions \eqref{eqn:valuematch} and \eqref{eqn:smoothpast}, we obtain
$$V_i\kh{\Delta{k}}=\Brace{&\frac{\lambda P-c}{r+\lambda}+C_1e^{-\eta\Delta{k}}+C_2e^{-\eta\Delta{k}},&&k^{**}\le\Delta{k}< k^*,\\
&\frac{\lambda P-c}{r+\lambda}+C_3e^{-\xi_+\Delta{k}}+C_4e^{-\xi_-\Delta{k}},&&0\le\Delta{k}< k^{**},\\
&-\frac{c}{r+\lambda}+C_5e^{\xi_+\Delta{k}}+C_6e^{\xi_-\Delta{k}},&&-k^{**}<\Delta{k}<0,\\
&-\frac{c}{r+\lambda}+C_7e^{\eta\Delta{k}}+C_8e^{-\eta\Delta{k}},&&-k^*<\Delta{k}<-k^{**},}$$
where $\xi_+>0$, $\xi_-<0$ and $\eta>0$ are defined in equations \eqref{eqn:xi} and \eqref{eqn:eta}, $k^*>0$ is the unique positive solution to the first-order condition:
\eqns{V_i'\kh{-k^*}=0,}
and $k^{**}\in\kh{0,k^*}$ is pinned down by the indifference condition
\eqn{\pi V_i'\kh{k^{**}}+\frac{1}{2}\sigma^2V_i''\kh{k^{**}}=0\quad\Rightarrow\quad \frac{e^{\eta \kh{k^*-k^{**}}}-e^{-\eta \kh{k^*-k^{**}}}}{e^{\eta \kh{k^*-k^{**}}}+e^{-\eta \kh{k^*-k^{**}}}}=\frac{\pi}{\sigma}\cdot\frac{2}{\sqrt{r+\lambda}}.\label{eqn:indiff}}
That is,
$$k^{**}=k^*-\frac{1}{2\eta}\log\frac{1+\frac{\pi}{\sigma}\cdot\frac{2}{\sqrt{r+\lambda}}}{1-\frac{\pi}{\sigma}\cdot\frac{2}{\sqrt{r+\lambda}}}.$$
Now we check that each agent does not want to deviate from the candidate equilibrium strategy:
\begin{enumerate}
\item Agent $i$ prefers $a_i=1$ to $a_i=0$ when leading by a lot ($\Delta{k}>k^{**}$)
$$\pi V_i'+\frac{1}{2}\sigma^2V_i''\ge 0,$$
but the other way around when leading slightly ($0<\Delta{k}<k^{**}$)
$$\pi V_i'+\frac{1}{2}\sigma^2V_i''\le 0.$$
This is ensured by the indifference condition \eqref{eqn:indiff} and the second order condition $\pi V_i''+\frac{1}{2}\sigma^2V_i'''\ge 0.$
\item Agent $i$ prefers $a_i=1$ to $a_i=0$ when following ($\Delta{k}<0$)
$$\pi V_i'+\frac{1}{2}\sigma^2V_i''\ge 0.$$
Note that
$$\pi V_i'\kh{-k^*}+\frac{1}{2}\sigma^2V_i''\kh{-k^*}\ge 0\quad\Leftrightarrow\quad c\ge 0,$$
which holds by assumption. Moreover, the second-order condition $\pi V_i''+\frac{1}{2}\sigma^2V_i'''\ge 0$ holds throughout the range.
\end{enumerate}
This completes the proof.
\end{proof}

\begin{proof}[Proof of Lemma \ref{lem:followrisk}]
Suppose, to the contrary, that 	$s_i^*\kh{\Delta{k}}=0$ for some $-{k}^*_i<\Delta{k}<0$. Consider the Bellman equation \eqref{eqn:Bellman} of the leader $j$ at state $k_j-k_i=-\Delta{k}>0$ (where $s_i^*=0$):
$$r V_j=\max_{a_j}\hkh{{\lambda\kh{P-V_j}-c}+a_j\pi V_j'+\frac{1}{2}a_j^2\sigma^2V_j''}.$$
The safe action $a_j=0$ is always feasible, so we obtain the following inequality
$$r V_j\ge {\lambda\kh{P-V_j}-c}\quad\Rightarrow\quad V_j\ge\frac{\lambda P-c}{r+\lambda}.$$
This shows that when $k_j-k_i=-\Delta{k}$, the expected payoff of agent $j$ is at least as good as that of winning the contest for sure. Therefore, the probability of agent $i$ winning the contest at state $k_i-k_j=\Delta{k}<0$ would be zero. Due to the flow cost, the continuation value of agent $i$, $V_i\kh{\Delta{k}}$, would be strictly negative, contradicting the assumption that agent $i$ chooses to stay in and exert effort! 

This completes the proof.
\end{proof}

\begin{proof}[Proof of Lemma \ref{lem:leadconcave}]
Assume that agent $i$ is the leader. At state $k_{i}-k_{j}=\Delta k \in\left(0, k_{i}^{*}\right)$, we know from Lemma \ref{lem:followrisk} that the follower $j$ takes the risky move, i.e., $s^*_j\kh{-\Delta{k}}=1$. Therefore, the Bellman equation of agent $i$ gives
$$rV_i=\max_{a_i}\hkh{{\lambda\kh{P-V_i}-c}+\kh{a_i-1}\pi V_i'+\frac{1}{2}\kh{a_i^2+1}\sigma^2V_i''}.$$
Suppose, to the contrary, that $V_i''\ge 0$. The risky action $a_i=1$ is always feasible, so we obtain the following inequality
$$r V_i\ge {\lambda\kh{P-V_i}-c}+\sigma^2V_i''\ge {\lambda\kh{P-V_i}-c}\quad\Rightarrow\quad V_i\ge\frac{\lambda P-c}{r+\lambda}.$$
Again, this shows that the expected payoff of agent $i$ is at least as good as that of winning the contest for sure. Thus, the probability of the follower $j$ winning the contest at state $k_j-k_i=-\Delta{k}<0$ would be zero. Due to the flow cost, the continuation value of agent $j$, $V_j\kh{-\Delta{k}}$, would be strictly negative, contradicting the assumption that agent $j$ chooses to stay in! 

This completes the proof.
\end{proof}

\begin{proof}[Proof of Proposition \ref{prop:Lprime}]
Suppose that the agents play according to some well-behaved MPE $\kh{{k}_i^*,s_i^*,{k}_j^*,s_j^*}$. 

We first show that the leader always takes the safe move, i.e., $s^*_i\kh{\Delta{k}}=1$ almost everywhere on $\kh{0,k^*_j}$. Fix some $\Delta{k}\in\kh{0,k^*_j}$. Lemma \ref{lem:leadconcave} implies that $V_i''\kh{\Delta{k}}<0$. Now that $\pi\le 0$, so $V_i''\kh{\Delta{k}}<0$ implies that $\pi V_i'\kh{\Delta{k}}+\frac{1}{2}\sigma^2V_i''\kh{\Delta{k}}<0$. Then from the best response condition \eqref{eqn:BRi}, we obtain $s^*_i\kh{\Delta{k}}=1$. 

Moreover, Lemma \ref{lem:followrisk} shows that the follower always takes the risky move if he chooses to stay in. Hence, any well-behaved MPE is completely specified by a pair of stopping boundaries $\kh{k^*_i,k^*_j}$, and each agent takes the risky move if and only if he is the follower ($\Delta{k}<0$).

Given agent $j$'s choice of stopping boundary ${k^*_j}>0$, agent $i$'s value function is given by the following second-order linear ODEs
$$r V_i=\Brace{&{{\lambda\kh{P-V_i}-c}-\pi V_i'+\frac{1}{2}\sigma^2V_i''},&&0\le \Delta{k}<k^*_j,\\&{{\lambda\kh{0-V_i}-c}+\pi V_i'+\frac{1}{2}\sigma^2V_i''},&&-k^*_i<\Delta{k}<0,}$$
subject to the boundary conditions \eqref{eqn:valuematch} and \eqref{eqn:smoothpast}.
Therefore, we obtain
$$V_i\kh{\Delta{k}}=\Brace{
&\frac{\lambda P-c}{r+\lambda}+C_{1i}e^{-\xi_+\Delta{k}}+C_{2i}e^{-\xi_-\Delta{k}},&&0\le \Delta{k}<k^*_j,\\
&-\frac{c}{r+\lambda}+C_{3i}e^{\xi_+\Delta{k}}+C_{4i}e^{\xi_-\Delta{k}},&&-k^*_i<\Delta{k}<0,}$$
where $\xi_+>0$ and $\xi_-<0$ are  defined in equation \eqref{eqn:xi}, and $k_i^*$ is the unique positive solution to the following first-order condition:
\eqn{\kh{\xi_++\xi_-}\kh{\xi_+e^{\xi_-k^*_i+\xi_+k^*_j}+\xi_-e^{\xi_+k^*_i+\xi_-k^*_j}}-2\xi_+\xi_-\kh{ e^{\xi_+\kh{k^*_i+k^*_j}}+e^{\xi_-\kh{k^*_i+k^*_j}}}=\phi \kh{\xi_+-\xi_-}\kh{ \xi_+e^{\xi_+k^*_j}-\xi_-e^{\xi_-k^*_j}},\label{eqn:FOC3}} 
giving rise to a best response function $k^*_i=BR\kh{k^*_j}$. 

Taking limits and derivatives with respect to $BR\kh{\cdot}$, we show that 
\begin{enumerate}
\item The choices of stopping boundaries are \textit{strategic substitutes}: $BR\kh{k^*_j}$ is strictly decreasing in $k^*_j$.
\item  $BR'\kh{k^*_j}\in\kh{-1,0}$ for all $k^*_j>0$. Moreover, $$\lim_{k^*_j\downarrow 0}BR\kh{k^*_j}<\infty\quad\text{and}\quad \lim_{k^*_j\to\infty}BR\kh{k^*_j}>0.$$
\end{enumerate}
As a result, the function $g\kh{x}=BR\kh{BR\kh{x}}-x$ is strictly decreasing, with $g\kh{0}>0$ and $g\kh{\infty}<0$. Therefore, there exists a unique  $k^*\in\kh{0,\infty}$ such that $k^*=BR\kh{BR\kh{k^*}}$. In particular, the uniqueness of $k^*$ implies that $k^*=BR\kh{k^*}.$

Substituting $k_i^*=k_j^*=k^*$ into equation \eqref{eqn:FOC3} simplifies it to equation \eqref{eqn:FOC}. Proposition \ref{prop:L} then implies that each agent does not want to deviate from the candidate equilibrium strategy.

This completes the proof.
\end{proof}

The remaining two propositions, Propositions \ref{prop:Hprime} and \ref{prop:Mprime}, rely on the following two lemmas.

\begin{Lem}\label{lem:leadrisk}
Suppose $\frac{\pi}{\sigma}> \frac{\sqrt{r+\lambda}}{2}$. In any well-behaved MPE, the leader always takes the risky move. Formally, $s_i^*\kh{\Delta{k}}=1$ almost everywhere on $\kh{0,k^*_j}$.
\end{Lem}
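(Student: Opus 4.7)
The plan is to argue by contradiction: suppose $s_i^*\kh{\Delta k}=0$ on some subinterval of $\kh{0,k_j^*}$, and derive a contradiction from the assumed bound $\pi/\sigma>\sqrt{r+\lambda}/2$, that is, $4\pi^2>\kh{r+\lambda}\sigma^2$. By Lemma \ref{lem:followrisk} the follower plays risky throughout, so on $\kh{0,k_j^*}$ the leader's Bellman equation \eqref{eqn:Bellman} reduces to
\[
rV_i=\lambda\kh{P-V_i}-c-\pi V_i'+\frac{1}{2}\sigma^2V_i''+\max\hkh{0,\ \pi V_i'+\frac{1}{2}\sigma^2V_i''},
\]
with \eqref{eqn:BRi} picking risky iff $\pi V_i'+\frac{1}{2}\sigma^2V_i''>0$.

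The core move will be to collapse the two regimes into a single sign-monitor. Write $M\equiv\frac{\lambda P-c}{r+\lambda}$ and define
\[
\Psi\kh{\Delta k}\equiv 2\pi V_i'\kh{\Delta k}-\kh{r+\lambda}\kh{M-V_i\kh{\Delta k}}.
\]
Substituting the safe-regime Bellman $\frac{1}{2}\sigma^2V_i''=-\kh{r+\lambda}\kh{M-V_i}+\pi V_i'$ yields $\pi V_i'+\frac{1}{2}\sigma^2V_i''=\Psi$, while substituting the risky-regime Bellman $\sigma^2V_i''=-\kh{r+\lambda}\kh{M-V_i}$ yields $\pi V_i'+\frac{1}{2}\sigma^2V_i''=\frac{1}{2}\Psi$. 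So in either regime the leader plays risky iff $\Psi>0$. Moreover, at any action-switch point both ODEs deliver the same $V_i''$ (via $\Psi=0$), so standard smooth-pasting conditions give $V_i\in C^1$ and in fact $V_i\in C^2$ across such points; in particular $\Psi$ is $C^1$ on $\kh{0,k_j^*}$.

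I would then evaluate $\Psi$ at the endpoint and its derivative at any interior switch. Value matching gives $V_i\kh{k_j^*}=M$, and concavity (Lemma \ref{lem:leadconcave}) with $V_i\le M$ forces $V_i'\kh{k_j^*}>0$, so $\Psi\kh{k_j^*}=2\pi V_i'\kh{k_j^*}>0$ and the leader plays risky in a left-neighborhood of $k_j^*$. At any interior switch point $b\in\kh{0,k_j^*}$, continuity forces $\Psi\kh{b}=0$, which rearranges to $V_i'\kh{b}=\kh{r+\lambda}\kh{M-V_i\kh{b}}/\kh{2\pi}$. Differentiating $\Psi=2\pi V_i'+\kh{r+\lambda}\kh{V_i-M}$ and plugging the Bellman expression for $V_i''$ from either regime together with this identity, both computations will collapse to
\[
\Psi'\kh{b}=\kh{r+\lambda}\kh{M-V_i\kh{b}}\cdot\frac{\kh{r+\lambda}\sigma^2-4\pi^2}{2\pi\sigma^2},
\]
which is strictly negative under $4\pi^2>\kh{r+\lambda}\sigma^2$ since $V_i\kh{b}<M$.

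Finally I would enumerate the admissible action profiles. A safe-to-risky switch at $b$ (safe on the left, risky on the right) requires $\Psi$ to ascend through $0$, forcing $\Psi'\kh{b}\ge0$ and contradicting the display above, so such switches are impossible. A risky-to-safe switch is compatible with $\Psi'\kh{b}<0$, but the impossibility of reversing then traps the leader in the safe regime on the entire right tail $\kh{b,k_j^*}$, contradicting the preceding paragraph's conclusion that he plays risky near $k_j^*$; ``safe throughout'' fails for the same reason. The only surviving profile is risky on all of $\kh{0,k_j^*}$, which is the claim. The main technical hurdle will be justifying the $C^1$/$C^2$ regularity of $V_i$ at interior action-switch points in any well-behaved MPE (the smooth-pasting input behind the $\Psi'\kh{b}$ computation); the rest of the argument is algebra and case enumeration.
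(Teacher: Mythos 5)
Your architecture is a genuinely different route from the paper's, and most of it checks out. The paper argues forward from a hypothesized safe point: it differentiates the safe-regime ODE to get $V_i'''$, shows $\pi V_i''+\frac{1}{2}\sigma^2 V_i'''<0$ whenever $\pi V_i'+\frac{1}{2}\sigma^2 V_i''\le 0$, concludes that safe play persists all the way to $k_j^*$, and then derives the contradiction at the boundary by showing the limiting conditions force the constant solution $V_i\equiv\frac{\lambda P-c}{r+\lambda}$. Your sign-monitor $\Psi=2\pi V_i'-(r+\lambda)(M-V_i)$ is an attractive substitute: it eliminates the second derivative via the Bellman equation in either regime, your identities ($\pi V_i'+\frac{1}{2}\sigma^2V_i''$ equals $\Psi$ in the safe regime and $\frac{1}{2}\Psi$ in the risky regime), the $C^2$-matching at switch points, and the computation $\Psi'(b)=(r+\lambda)(M-V_i(b))\frac{(r+\lambda)\sigma^2-4\pi^2}{2\pi\sigma^2}<0$ at any zero of $\Psi$ are all correct, and the resulting switch classification (no safe-to-risky switch; a risky-to-safe switch traps the leader in safe play up to $k_j^*$) is sound. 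This buys a first-order argument where the paper needs $V_i'''$.

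The gap is the anchor of the whole contradiction: your claim that concavity together with $V_i\le M$ forces $V_i'(k_j^*)>0$, hence $\Psi(k_j^*)>0$ and risky play near $k_j^*$. That deduction is invalid: a strictly concave function lying below $M$ can hit $M$ at the right endpoint with zero derivative (e.g., $M-(k_j^*-\Delta k)^2$), so concavity and $V_i\le M$ only yield $V_i'(k_j^*)\ge 0$. Moreover your $\Psi'$ formula degenerates exactly there, since $V_i(k_j^*)=M$ makes $\Psi'(k_j^*)=0$, so the monotonicity machinery cannot rescue the endpoint. The case $V_i'(k_j^*)=0$ must be excluded by a separate argument, and the natural one is precisely the paper's closing step: with Cauchy data $V_i(k_j^*)=M$, $V_i'(k_j^*)=0$, the constant $M$ solves the governing ODE in either regime, so by uniqueness $V_i\equiv M$ on a left-neighborhood of $k_j^*$; the leader's value then equals that of winning for sure, the follower's winning probability is zero, and the flow cost makes the follower's continuation value strictly negative, contradicting his participation. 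With that patch (and the $C^1$/$C^2$ regularity of well-behaved value functions at switch points, which you correctly flag and which the paper also uses implicitly when it differentiates the ODE), your proof goes through.
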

\begin{proof}
Suppose, to the contrary, that $s_i^*\kh{k^{**}_i}=0$ for some $k^{**}_i\in\kh{0,k^*_j}$. Lemma \ref{lem:leadconcave} implies that $V_i''\kh{k^{**}_i}<0$. Combining the best response condition \eqref{eqn:BRi} and the assumption that $\frac{\pi}{\sigma}>\frac{\sqrt{r+\lambda}}{2}$, we obtain
\eqn{\pi V_i'+\frac{1}{2}\sigma^2V_i''\le 0\quad\Rightarrow\quad&\frac{V_i'}{-V_i''}\le \frac{\frac{1}{2}\sigma^2}{\pi}<\frac{\pi}{\frac{1}{2}\kh{r+\lambda}}\quad\text{at }\Delta{k}=k^{**}_i\notag\\
\quad\Rightarrow\quad&\pi V_i''+\frac{1}{2}\kh{r+\lambda}V_i'<0\quad\text{at }\Delta{k}=k^{**}_i.\label{eqn:ineq2}}
Moreover, we know from Lemma \ref{lem:followrisk} that the follower $j$ takes the risky move, i.e., $s^*_j\kh{-k^{**}_i}=1$. Therefore, the Bellman equation of agent $i$ gives
$$rV_i={\lambda\kh{P-V_i}-c}-\pi V_i'+\frac{1}{2}\sigma^2V_i''\quad\text{for }\Delta{k}\text{ within a neighborhood of }k^{**}_i.$$
Differentiating both sides of the ODE, we get
$$rV_i'=-\lambda V_i'-\pi V_i''+\frac{1}{2}\sigma^2V_i'''\quad\Rightarrow\quad \kh{r+\lambda}V_i'=\frac{1}{2}\sigma^2V_i'''-\pi V_i''.$$
Substituting into equation \eqref{eqn:ineq2}, we get
$$\pi V_i''+\frac{1}{4}\sigma^2V_i'''-\frac{1}{2}\pi V_i''<0\quad\text{at }\Delta{k}=k^{**}_i,$$
which implies that
$$\pi V_i''+\frac{1}{2}\sigma^2V_i'''<0\quad\text{at }\Delta{k}=k^{**}_i.$$
To sum up, we obtain
$$\pi V_i''+\frac{1}{2}\sigma^2V_i'''<0\quad\text{whenever}\quad\pi V_i'+\frac{1}{2}\sigma^2V_i''\le 0,$$
implying that
$$\pi V_i'+\frac{1}{2}\sigma^2V_i''<0\quad\text{for all }\Delta{k}>k^{**}_i.$$
The best response condition \eqref{eqn:BRi} then implies that $s_i^*\kh{\Delta{k}}=0$ for all $\Delta{k}>k^{**}_i$. Therefore, for $\Delta{k}\in\kh{k^{**}_i,k^*_j}$ agent $i$'s value function is given by the following second-order linear ODE
$$r V_i=\lambda\kh{P-V_i}-c-\pi V_i'+\frac{1}{2}\sigma^2V_i''\text{ subject to }V_i\kh{k^*_j}=\frac{\lambda P-c}{r+\lambda}.$$
In particular, it follows that
$$\lim_{\Delta{k}\uparrow k^*_j}\kh{-\pi V_i'+\frac{1}{2}\sigma^2V_i''}=0.$$
Combining $\pi>0$ and $V_i''< 0$ (using Lemma~\ref{lem:leadconcave} again), we obtain
$$\lim_{\Delta{k}\uparrow k^*_j}V_i'=\lim_{\Delta{k}\uparrow k^*_j}V_i''=0.$$
However, the only solution to the above ODE that has this limiting property is the constant solution $$V_i\kh{\Delta{k}}= \frac{\lambda P-c}{r+\lambda}\quad\text{ for all }\Delta{k}\in\kh{k^{**}_i,k^*_j}.$$
Again, this shows that the expected payoff of agent $i$ is the same as winning the contest for sure. Thus, the probability of the follower $j$ winning the contest at state $k_j-k_i=-k_i^{**}<0$ would be zero. Due to the flow cost, the continuation value of agent $j$, $V_j\kh{-k_i^{**}}$, would be strictly negative, contradicting the assumption that agent $j$ chooses to stay in! 

This completes the proof.
\end{proof}

\begin{Lem}\label{lem:leadswitch}
Suppose $0<\frac{\pi}{\sigma}<\frac{\sqrt{r+\lambda}}{2}$. In any well-behaved MPE, if the leader takes the risky move at some state, then he also takes it at all higher states (i.e., closer to the follower's dropping boundary). Formally, if $s_i^*\kh{k^{**}_i}=1$ for some $k^{**}_i\in\kh{0,k^*_j}$, then $s_i^*\kh{\Delta{k}}=1$ almost everywhere on $\left[k^{**}_i,k^*_j\right)$.
\end{Lem}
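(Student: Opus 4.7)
The plan is to track the sign of $g\kh{\Delta{k}}\equiv\pi V_i'\kh{\Delta{k}}+\frac{1}{2}\sigma^2V_i''\kh{\Delta{k}}$ on the interval $\left[k^{**}_i,k^*_j\right)$. By the best-response condition \eqref{eqn:BRi}, $s_i^*=1$ requires $g\ge 0$ and $s_i^*=0$ requires $g\le 0$, so it suffices to show that $g$ stays strictly positive to the right of $k^{**}_i$ whenever $g\kh{k^{**}_i}\ge 0$.

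First I would establish that $g$ is strictly increasing on any subinterval where $s_i^*=1$. Combining Lemma \ref{lem:followrisk} with $a_i=1$ in the Bellman equation \eqref{eqn:Bellman} gives $\kh{r+\lambda}V_i=\lambda P-c+\sigma^2V_i''$, and differentiating once yields $V_i'''=\eta^2V_i'$ with $\eta=\sqrt{r+\lambda}/\sigma$. Hence $g'=\pi V_i''+\frac{1}{2}\kh{r+\lambda}V_i'$. The inequalities $g\ge 0$ and $V_i''<0$ (Lemma \ref{lem:leadconcave}) together force $V_i'\ge -\sigma^2V_i''/\kh{2\pi}>0$, so that $g'\ge V_i''\fkh{\pi-\kh{r+\lambda}\sigma^2/\kh{4\pi}}$; the bracketed factor is negative under the medium-return bound $\pi/\sigma<\sqrt{r+\lambda}/2$, so $g'>0$ strictly.

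Next I would analyze the jump in $g$ at any candidate switch point $\Delta{k}'\in\kh{k^{**}_i,k^*_j}$ where the leader moves from the risky move to the safe action. Well-behavedness delivers $C^1$ continuity of $V_i$ across $\Delta{k}'$ through value matching and smooth pasting. The safe-regime Bellman equation $\kh{r+\lambda}V_i=\lambda P-c-\pi V_i'+\frac{1}{2}\sigma^2V_i''$ differs from the risky one only in the coefficients on $V_i'$ and $V_i''$. Eliminating $\kh{r+\lambda}V_i-\kh{\lambda P-c}$ across the two regimes gives $V_i''\kh{\Delta{k}'+}=2V_i''\kh{\Delta{k}'-}+2\pi V_i'\kh{\Delta{k}'}/\sigma^2$, and substituting into $g$ yields the doubling identity $g\kh{\Delta{k}'+}=2\,g\kh{\Delta{k}'-}$.

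Combining the two steps, suppose for contradiction there is a first switch point $\Delta{k}'\in\kh{k^{**}_i,k^*_j}$ from risky to safe. Then $g\kh{\Delta{k}'-}>g\kh{k^{**}_i}\ge 0$ by step one, so step two gives $g\kh{\Delta{k}'+}>0$, which contradicts the best-response requirement $g\le 0$ immediately to the right of $\Delta{k}'$. Hence no such switch exists and $s_i^*=1$ almost everywhere on $\left[k^{**}_i,k^*_j\right)$. The main subtlety I would need to handle is ruling out pathological strategies that are indifferent ($g\equiv 0$) on a positive-measure set of states classified as safe; a direct ODE calculation in the safe regime shows this would force $\pi/\sigma=\sqrt{r+\lambda}/2$, which is excluded by the strict inequality in the medium-return definition.
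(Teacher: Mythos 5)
Your proof is correct and follows essentially the same route as the paper's: both arguments track $g=\pi V_i'+\tfrac{1}{2}\sigma^2 V_i''$ and show, via the differentiated risky-regime Bellman equation, Lemma \ref{lem:leadconcave}, and the bound $\pi/\sigma<\sqrt{r+\lambda}/2$, that $g'=\pi V_i''+\tfrac{1}{2}\kh{r+\lambda}V_i'>0$ wherever $g\ge 0$, so $g$ can never return to zero once nonnegative. Your doubling identity $g\kh{\Delta{k}'+}=2\,g\kh{\Delta{k}'-}$ at a candidate switch point is a careful additional step that the paper leaves implicit when it passes from ``$\pi V_i''+\tfrac{1}{2}\sigma^2V_i'''>0$ whenever $\pi V_i'+\tfrac{1}{2}\sigma^2V_i''\ge 0$'' to the conclusion that the latter stays positive for all higher states.
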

\begin{proof}
Assume that agent $i$ is the leader and $s_i^*\kh{k^{**}_i}=1$ for some $k^{**}_i\in\kh{0,k^*_j}$. Lemma \ref{lem:leadconcave} implies that $V_i''\kh{k^{**}_i}<0$. Combining the best response condition \eqref{eqn:BRi} and the assumption that $0<\frac{\pi}{\sigma}<\frac{\sqrt{r+\lambda}}{2}$, we obtain
\eqn{\pi V_i'+\frac{1}{2}\sigma^2V_i''\ge 0\quad\Rightarrow\quad&\frac{V_i'}{-V_i''}\ge \frac{\frac{1}{2}\sigma^2}{\pi}>\frac{\pi}{\frac{1}{2}\kh{r+\lambda}}\quad\text{at }\Delta{k}=k^{**}_i\notag\\
\quad\Rightarrow\quad&\pi V_i''+\frac{1}{2}\kh{r+\lambda}V_i'>0\quad\text{at }\Delta{k}=k^{**}_i.\label{eqn:ineq1}}
Moreover, we know from Lemma \ref{lem:followrisk} that the follower $j$ takes the risky move, i.e., $s^*_j\kh{-k^{**}_i}=1$. Therefore, the Bellman equation of agent $i$ gives
$$rV_i={\lambda\kh{P-V_i}-c}+\sigma^2V_i''\quad\text{for }\Delta{k}\text{ within a neighborhood of }k^{**}_i.$$
Differentiating both sides of the ODE, we get
$$rV_i'=-\lambda V_i'+\sigma^2V_i'''\quad\Rightarrow\quad \kh{r+\lambda}V_i'=\sigma^2V_i'''.$$
Substituting into equation \eqref{eqn:ineq1}, we get
$$\pi V_i''+\frac{1}{2}\sigma^2V_i'''>0\quad\text{at }\Delta{k}=k^{**}_i.$$
To sum up, we obtain
$$\pi V_i''+\frac{1}{2}\sigma^2V_i'''>0\quad\text{whenever}\quad\pi V_i'+\frac{1}{2}\sigma^2V_i''\ge 0,$$
implying that
$$\pi V_i'+\frac{1}{2}\sigma^2V_i''>0\quad\text{for all }\Delta{k}>k^{**}_i.$$
The best response condition \eqref{eqn:BRi} then implies that $s_i^*\kh{\Delta{k}}=1$ for all $\Delta{k}>k^{**}_i$. 

This completes the proof.
\end{proof}

\begin{proof}[Proof of Propositions \ref{prop:Hprime} and \ref{prop:Mprime}]

Assume that $\pi>0$, and that the agents play according to some well-behaved MPE $\kh{{k}_i^*,s_i^*,{k}_j^*,s_j^*}$. 

Denote by $k_i^{**}\equiv\inf\hkh{\Delta{k}\in\kh{0,k^*_j}:s^*_i\kh{\Delta{k}}=1}.$ Lemmas  \ref{lem:leadrisk} and \ref{lem:leadswitch} jointly imply that 
\eqns{s_i^*\kh{\Delta{k}}=\Brace{&1,&&k_i^{**}< \Delta{k}<k^*_j,\\
&0,&&0<\Delta{k}<k_i^{**},}}
regardless of whether $\frac{\pi}{\sigma}<\frac{\sqrt{r+\lambda}}{2}$. Moreover, Lemma \ref{lem:followrisk} shows that the follower always takes the risky move if she chooses to stay in. Hence, any MPE is completely specified by a pair of stopping boundaries $\kh{k^*_i,k^*_j}$ and a pair of switching points $\kh{k^{**}_i,k^{**}_j}$, and each agent $i$ takes the risky move if and only if ($\Delta{k}<0$ or $\Delta{k}\ge k^{**}_i$).

Given agent $j$'s choice of stopping boundary ${k^*_j}$ and switching point $k^{**}_j$, agent $i$'s value function is given by the following second-order linear ODEs
$$r V_i=\Brace{&{{\lambda\kh{P-V_i}-c}+\sigma^2V_i''},&&k_i^{**}\le \Delta{k}<k^*_j,\\
&{{\lambda\kh{P-V_i}-c}-\pi V_i'+\frac{1}{2}\sigma^2V_i''},&&0\le \Delta{k}<k_i^{**},\\
&{{\lambda\kh{0-V_i}-c}+\pi V_i'+\frac{1}{2}\sigma^2V_i''},&&-k^{**}_j<\Delta{k}<0,\\
&{{\lambda\kh{0-V_i}-c}+\sigma^2V_i''},&&-k^*_i<\Delta{k}<-k^{**}_j.}$$
Solving the above second-order linear ODEs by imposing the boundary conditions \eqref{eqn:valuematch} and \eqref{eqn:smoothpast}, we obtain
$$V_i\kh{\Delta{k}}=\Brace{
&\frac{\lambda P-c}{r+\lambda}+C_{1i}e^{-\eta\Delta{k}}+C_{2i}e^{\eta\Delta{k}},&&k^{**}_i\le\Delta{k}< k^*_j,\\
&\frac{\lambda P-c}{r+\lambda}+C_{3i}e^{-\xi_+\Delta{k}}+C_{4i}e^{-\xi_-\Delta{k}},&&0\le\Delta{k}< k^{**}_i,\\
&-\frac{c}{r+\lambda}+C_{5i}e^{\xi_+\Delta{k}}+C_{6i}e^{\xi_-\Delta{k}},&&-k^{**}_j<\Delta{k}<0,\\
&-\frac{c}{r+\lambda}+C_{7i}e^{\eta\Delta{k}}+C_{8i}e^{-\eta\Delta{k}},&&-k^*_i<\Delta{k}<-k^{**}_j,}$$
where $\xi_+>0$, $\xi_-<0$ and $\eta>0$ are defined in equations \eqref{eqn:xi} and \eqref{eqn:eta}. Moreover, If agent $i$'s optimal switching point $k^{**}_i$ is interior, it is pinned down by the indifference condition $\pi V_i'\kh{k^{**}_i}+\frac{1}{2}\sigma^2V_i''\kh{k^{**}_i}=0$, which yields
$$ \frac{e^{\eta \kh{k^*_j-k^{**}_i}}-e^{-\eta \kh{k^*_j-k^{**}_i}}}{e^{\eta \kh{k^*_j-k^{**}_i}}+e^{-\eta \kh{k^*_j-k^{**}_i}}}=\frac{2\pi}{\eta\sigma^2}\quad\Rightarrow\quad k^{**}_i=k^*_j-\frac{1}{2\eta}\log\frac{1+\frac{2\pi}{\eta\sigma^2}}{1-\frac{2\pi}{\eta\sigma^2}}.$$
This implies that agent $i$ best responds with
$$k^{**}_i=\max\hkh{0,\,k^*_j-\frac{1}{2\eta}\log\frac{1+\frac{2\pi}{\eta\sigma^2}}{1-\frac{2\pi}{\eta\sigma^2}}}.$$

Analogous to the proof of Proposition \ref{prop:Lprime}, for fixed $k_j^{*}$ (and thus $k_i^{**}$), $k_i^*$ is the unique positive solution to the first-order condition of smooth-pasting. This gives rise to a best response function $k_i^*=\widehat{BR}\kh{k_j^*}$, which has a unique $k^*\in\kh{0,\infty}$ such that $k^*=\widehat{BR}\kh{\widehat{BR}\kh{k^*}}$. In particular, the uniqueness of $k^*$ implies that $k^*=\widehat{BR}\kh{k^*}.$ Additionally,
$$k^*>\frac{1}{2\eta}\log\frac{1+\frac{2\pi}{\eta\sigma^2}}{1-\frac{2\pi}{\eta\sigma^2}}\quad\Leftrightarrow\quad \frac{\pi}{\sigma}<\frac{\sqrt{r+\lambda}}{2}\cdot \frac{\sqrt{2\phi\kh{\sqrt{\phi^2+8}+\phi}-8}}{\sqrt{\phi^2+8}+\phi}=\frac{\sqrt{r+\lambda}}{2}\cdot f\kh{\phi},$$
so it follows that $k_i^{**}=k_j^{**}=k^{**}$, where $k^{**}>0$ in the medium return case and $k^{**}=0$ in the high return case. Propositions \ref{prop:H} and \ref{prop:M} then imply that each agent does not want to deviate from the candidate equilibrium strategy.

This completes the proof.
\end{proof}

\subsection{Proofs for Section \ref{sec:design}}
\begin{proof}[Proof of Proposition \ref{prop:equivobj}]
In each of the three types of equilibria, let $\tau_{k^{*}} \equiv \inf \left\{t \geq 0:\abs{k_i\kh{t}-k_j\kh{t}} \geq k^{*}\right\}$ denote the first hitting time to the stopping boundary, and let $$
\Psi\kh{\theta ;k^*, k_0}\equiv\mathbb{E}\left[e^{-\theta\tau_{k^*}} \given {k_i\kh{0}-k_j\kh{0}}=k_0\right], \quad \theta > 0
$$
denote its Laplace transform.

Success occurs at rate $\kh{\overline{\lambda}+\underline{\lambda}}$ when both agents stay in the contest, and at a slower rate $\overline{\lambda}$ when the follower drops out. Denote by $T$ the duration before success, which is an exponential random variable with mean $1/\kh{\overline{\lambda}+\underline{\lambda}}$ when both agents choose to stay in, and with mean $1/{\overline{\lambda}}$ when only the leader chooses to stay.

The expected length of time that the follower stays in the contest is given by
\eqns{\mathbb{E}\fkh{\int_{0}^\infty\indic{\tau_{k^*}>t\,\&\,T> t}\,dt\given k_0}&=\mathbb{E}\fkh{\int_{0}^\infty\indic{\tau_{k^*}>t}\cdot\indic{T> t}\,dt\given k_0}\\
&=\int_{0}^\infty\mathbb{P}\kh{\tau_{k^*}>t\given k_0}\cdot e^{-\kh{\overline{\lambda}+\underline{\lambda}}t}\,dt\\
&=\int_{0}^\infty\frac{1- e^{-\kh{\overline{\lambda}+\underline{\lambda}}x}}{\overline{\lambda}+\underline{\lambda}}\,dF_{\tau_{k^*}}\kh{x}\\
&=\frac{1-\Psi\kh{\overline{\lambda}+\underline{\lambda} ;k^*,k_0}}{\overline{\lambda}+\underline{\lambda}}.}
Therefore, Objectives \ref{item:2} and \ref{item:3} are equivalent as long as $\Psi$ is strictly decreasing in $k^*$.

To show that $\Psi$ is strictly decreasing in $k^*$, first note that $k_i-k_j$ has almost surely continuous sample paths in all three types of equilibria. Therefore, $\tau_{k^*}$ is strictly increasing in $k^*$ in the sense of first-order stochastic dominance. Next, observe that $e^{-\theta x}$ is strictly decreasing in $x$. Combining this with the monotonicity of $\tau_{k^*}$, we conclude that $\Psi(\theta;k^*,k_0)=\mathbb{E}\fkh{e^{-\theta\tau_{k^*}}\given k_i(0)-k_j(0)=k_0}$ is strictly decreasing in $k^*$.\footnote{As a sanity check, in the low return case ($\pi\le 0$), $\abs{\Delta{k}}$ follows a reflected Brownian motion with drift $-\pi$ and volatility $\sigma$. It follows from Theorem 1 in \cite{Mayerhofer19} that
\eqns{
\Psi\kh{\theta ;k^*, k_0}=e^{\frac{\kh{-\pi}\kh{k^*-k_0}}{\sigma^{2}}} \frac{\sqrt{\pi^{2}+2 \theta \sigma^{2}} \cosh \left(\frac{k_0 \sqrt{\pi^{2}+2 \theta \sigma^{2}}}{\sigma^{2}}\right)+\kh{-\pi} \sinh \left(\frac{k_0 \sqrt{\pi^{2}+2 \theta \sigma^{2}}}{\sigma^{2}}\right)}{\sqrt{\pi^{2}+2 \theta \sigma^{2}} \cosh \left(\frac{k^* \sqrt{\pi^{2}+2 \theta \sigma^{2}}}{\sigma^{2}}\right)+\kh{-\pi} \sinh \left(\frac{k^*\sqrt{\pi^{2}+2 \theta \sigma^{2}}}{\sigma^{2}}\right)},}
which is indeed strictly decreasing in $k^*$.}

Similarly, the expected time needed to achieve success is given by
\eqns{\mathbb{E}\fkh{\int_{0}^\infty\indic{T> t}\,dt\given k_0}&=\mathbb{E}\fkh{\int_{0}^\infty\indic{\tau_{k^*}>t}\cdot\indic{T> t}\,dt\given k_0}+\mathbb{E}\fkh{\int_{0}^\infty\indic{\tau_{k^*}\le t}\cdot\indic{T> t}\,dt\given k_0}\\
&=\int_{0}^\infty\mathbb{P}\kh{\tau_{k^*}>t\given k_0}\cdot e^{-\kh{\overline{\lambda}+\underline{\lambda}}t}\,dt+\int_{0}^\infty\mathbb{P}\kh{\tau_{k^*}\le t\given k_0}\cdot e^{-\overline{\lambda}t}\,dt\\
&=\int_{0}^\infty\frac{1- e^{-\kh{\overline{\lambda}+\underline{\lambda}}x}}{\overline{\lambda}+\underline{\lambda}}\,dF_{\tau_{k^*}}\kh{x}+\int_{0}^\infty\frac{e^{-\overline{\lambda}x}}{\overline{\lambda}}\,dF_{\tau_{k^*}}\kh{x}\\
&=\frac{1-\Psi\kh{\overline{\lambda}+\underline{\lambda} ;k^*,k_0}}{\overline{\lambda}+\underline{\lambda}}+\frac{\Psi\kh{\overline{\lambda};k^*, k_0}}{\overline{\lambda}}.}
The monotonicity of $\Psi$  in $k^*$ indicates that Objectives \ref{item:1} and \ref{item:3} are also equivalent.

This completes the proof.
\end{proof}

\begin{proof}[Proof of Proposition \ref{prop:optprize}] Proposition \ref{prop:equivobj} implies that the principal's problem is equivalent to the maximization of $\overline{\phi}$ defined by equation \eqref{eqn:prof} subject to the budget constraint $\overline{P}+\underline{P}\le B$.
\begin{enumerate}
\item If $\kh{\overline{\lambda}+\underline{\lambda}} B/2<c$, then for all choices of $\kh{\overline{P},\underline{P}}$ the denominator of $\overline{\phi}$, $c-\kh{\underline\lambda\overline{P}+\overline{\lambda}\underline{P}}$, is strictly positive.  It follows that
\eqns{\frac{\kh{\overline \lambda-\underline{\lambda}}\kh{ \overline{P}-\underline{P}}}{c-\kh{\underline\lambda\overline{P}+\overline{\lambda}\underline{P}}}\le \frac{\kh{\overline \lambda-\underline{\lambda}}B}{c-\underline{\lambda }B}\quad &\Leftrightarrow\quad cB-c\overline{P}+\kh{c-\overline{\lambda}B-\underline{\lambda} B}\underline{P}\ge 0\\
&\Leftarrow\quad cB-c\kh{B-\underline{P}}+\kh{c-\overline{\lambda}B-\underline{\lambda} B}\underline{P}\ge 0\\
&\Leftrightarrow\quad \kh{c-\kh{\overline{\lambda}+\underline{\lambda}} B/2}\underline{P}\ge 0,}
as desired. Equality holds if $\overline{P}=B$ and $\underline{P}=0$.

\item  If $\kh{\overline{\lambda}+\underline{\lambda}} B/2>c$, then the denominator of $\overline{\phi}$, $c-\kh{\underline\lambda\overline{P}+\overline{\lambda}\underline{P}}$, can be negative if $\overline{P}=B/2+\epsilon$ and $\underline{P}=B/2-\epsilon$, with $\epsilon\in\kh{0,\kh{\kh{\overline{\lambda}+\underline{\lambda}} B/2-c}/\kh{\overline \lambda-\underline{\lambda}}}$. Under such choices of $\kh{\overline{P},\underline{P}}$, $\overline{\phi}=\infty$, which is clearly optimal.
\end{enumerate}
This completes the proof.
\end{proof}

\end{appendices}

\end{document}